\documentclass[11pt,reqno,a4paper]{amsart}
\pdfoutput=1
\usepackage[foot]{amsaddr}

\usepackage{ifthen}
\newboolean{proceedings_version}
\setboolean{proceedings_version}{false}
\usepackage{amsfonts,amstext,amsmath,amssymb,mathrsfs,stmaryrd,calc,amsthm,mathtools,bold-extra,mathdots}
\usepackage{bm}%
\usepackage{enumerate}
\usepackage{macros}

\usepackage{thmtools,thm-restate}
\newtheorem{theorem}{Theorem}
\newtheorem{lemma}[theorem]{Lemma}

\newtheorem{fact}[theorem]{Fact}

\theoremstyle{definition}
\newtheorem{definition}[theorem]{Definition}
\newtheorem{example}[theorem]{Example}
\theoremstyle{remark}

\newtheorem{remark}[theorem]{Remark}
\usepackage{subfig,tikz}
\usetikzlibrary{positioning}
\usetikzlibrary{arrows}
\usetikzlibrary{automata}
\usetikzlibrary{trees}
\tikzstyle{intermediate}=[draw=gray!90,very thick,fill=gray!20,circle]
\tikzstyle{state}=[intermediate,minimum width=.7cm]
\tikzstyle{every node}=[font=\small]
\tikzstyle{every edge}=[draw,->,>=stealth',shorten >=1pt,semithick]
\tikzstyle{accepting}=[accepting by arrow]
\tikzstyle{initial}=[initial by arrow,initial text=]
\usepackage[numbers]{natbib}
\renewcommand{\cite}{\citep}

\usepackage{url,hyperref}

\providecommand{\urlstyle}[1]{}
\urlstyle{same}
\providecommand{\doi}[1]{\href{http://dx.doi.org/#1}{\nolinkurl{doi:#1}}}
\begin{document}
\renewcommand{\sectionautorefname}{Section}
\renewcommand{\subsectionautorefname}{Section}
\renewcommand{\subsubsectionautorefname}[1]{\S}
\title{The Power of Well-Structured Systems}
\thanks{Work supported by the ReacHard project, ANR grant 11-BS02-001-01.}
\author[S.~Schmitz]{Sylvain Schmitz}
\author[Ph.~Schnoebelen]{Philippe Schnoebelen}
\address{LSV, ENS Cachan \& CNRS, Cachan, France}
\email{\{schmitz,phs\}@lsv.ens-cachan.fr}
\begin{abstract}
  Well-structured systems, aka WSTSs, are computational models where
  the set of possible configurations is equipped with a
  well-quasi-ordering which is compatible with the transition relation
  between configurations. This structure supports generic decidability
  results that are important in verification and several other fields.

This paper recalls the basic theory underlying well-structured systems
and shows how two classic decision algorithms can be formulated as an
exhaustive search for some ``bad'' sequences. This lets us describe
new powerful techniques for the complexity analysis of WSTS
algorithms. Recently, these techniques have been successful in
precisely characterising the power, in a complexity-theoretical sense,
of several important WSTS models like unreliable channel systems,
monotonic counter machines, or networks of timed systems.
\end{abstract}

\maketitle

\section*{Introduction}

\emph{Well-Structured (Transition) Systems}, aka WSTS, are a family of  computational models where 
the usually infinite set of states is equipped with a well-quasi-ordering that
is ``compatible'' with the computation steps. The existence of this
well-quasi-ordering allows for the decidability of some important
  behavioural properties like Termination or Coverability. 

Historically, the idea can be traced back to~\citet{finkel87c} who
gave a first definition for WSTS abstracting from Petri nets and fifo
nets, and who showed the decidability of Termination and Finiteness
(aka Boundedness). Then \citet{finkel94} applied the WSTS idea to
Termination of lossy channel systems, while \citet{abdulla96b} introduced
the backward-chaining algorithm for Coverability. One will find a good
survey of these early results, and a score of WSTS examples,
 in~\cite{abdulla2000c,finkel98b,abdulla2010,BS-fmsd2012}.

The basic theory saw several important developments in
recent years, like
the study of comparative expressiveness for WSTS~\cite{abdulla2011},
or the completion technique for forward-chaining in
WSTS~\cite{FWD-WSTS-J2}. Simultaneously, many new WSTS models have
been introduced
(in distributed computing, software
verification, or other fields), using well-quasi-orderings based on trees,
sequences of vectors, or graphs (see references in~\cite{SS-esslli2012}),
rather than the more traditional vectors of natural numbers or words
with the subword relation.

Another recent development is the complexity analysis of WSTS models
and algorithms.
New techniques, borrowing
notions from proof theory and ordinal analysis, can now precisely
characterise the complexity of some of the most widely used
WSTS~\cite{lcs,phs-mfcs2010,HSS-lics2012}.  The difficulty here is that
one needs complexity functions, complexity classes, and hard problems,
with which computer scientists are not familiar.

The aim of this paper is to provide a gentle introduction to the
main ideas behind the complexity analysis of WSTS algorithms. 
These
complexity questions are gaining added relevance: more and
more recent papers rely on reductions to (or from) a known WSTS problem to show the
decidability (or the hardness) of problems in unrelated fields, from
modal and temporal logic~\cite{kurucz06,mtl} to XPath-like queries~\cite{jurdzinski2011,barcelo13}.

\subsubsection*{Outline of the paper.}
Section~\ref{sec-wsts} recalls the definition of WSTS, \autoref{sec-broadcast} illustrates it
with a simple example, while \autoref{sec-verif} presents  the two main
verification algorithms for WSTS.  Section~\ref{sec-upb} bounds the running
time of these algorithms by studying the length of bad sequences using
fast-growing functions. Section~\ref{sec-low} explains how lower bounds
matching these enormous upper bounds have been established in a few
recent works, including the $\FC{\ezero}$-completeness result in this
volume~\cite{HaaseSS13}.

\section{What are WSTS?}
\label{sec-wsts}
A simple, informal way to define WSTS is to say that they are
transition systems \emph{whose behaviour is monotonic w.r.t.\  a
  well-ordering}. Here, monotonicity of behaviour means that the
states of the transition system are ordered in a way such that larger
states have more available steps than smaller states. Requiring that
the ordering of states is a well-ordering (more generally, a
well-quasi-ordering) ensures that monotonicity translates into
decidability for some behavioural properties like Termination or
Coverability.

Let us start with monotonicity. In its simplest form, a
\emph{transition system} (a TS) is a structure $\Scal=(S,\to)$ where $S$ is the set
of \emph{states} (typical elements $s_1,s_2,\ldots$) and
${\to}\subseteq S\times S$ is the \emph{transition relation}. As
usual, we write ``$s_1\to{}s_2$'' rather than ``$(s_1,s_2)\in{\to}$''
to denote steps.  A TS is \emph{ordered} when it is
further equipped with a quasi-ordering of its states, i.e., a
reflexive and transitive relation ${\leq}\subseteq S\times S$.
\begin{definition}[Monotonicity]
\label{def-monotonicity}
An ordered transition system $\Scal=(S,{\to},{\leq})$
is \emph{monotonic}
$\equivdef$ for all $s_1,s_2,t_1\in S$
\[
\bigl(s_1\to s_2 \text{ and } s_1\leq t_1\bigr)
\text{ implies }
\exists t_2\in S: \bigl(t_1\to t_2 \text{ and } s_2\leq t_2\bigr)
\:.
\]
\end{definition}
This property is also called ``\emph{compatibility}'' (of the ordering
with the transitions)~\cite{finkel98b}. Formally, it just means that
$\leq$ is a \emph{simulation} relation for $\Scal$, in precisely the
classical sense of \citet{milner90b}. The point of
Def.~\ref{def-monotonicity} is to ensure that a ``larger state'' can
do ``more'' than a smaller state. For example, it entails the
following Fact that plays a crucial role in \autoref{sec-verif}.

Given two finite runs $\sss=(s_0\to s_1\to \cdots \to s_n)$ and $\ttt=(t_0\to
 t_1\to \cdots \to t_m)$, we say that $\sss$ simulates
 $\ttt$ \emph{from below} (and $\ttt$ simulates $\sss$ \emph{from
 above}) if $n=m$ and $s_i\leq t_i$ for all $i=0,\ldots,n$.
\begin{fact}
\label{fact-repeat}
Any run $\sss=(s_0\to s_1\to \cdots\to s_n)$ in a WSTS $\Scal$ can be
simulated from above, starting from any $t\geq s_0$.
\end{fact}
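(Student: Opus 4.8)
The plan is to proceed by a straightforward induction on the length $n$ of the run $\sss$, using monotonicity (Def.~\ref{def-monotonicity}) to extend the simulating run one transition at a time.

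For the base case $n=0$, the run $\sss$ is just the single state $s_0$, and for any $t\geq s_0$ the one-state run $\ttt=(t)$ trivially simulates $\sss$ from above. For the inductive step, assume the claim holds for runs of length $n$ and consider a run $s_0\to s_1\to\cdots\to s_n\to s_{n+1}$ together with some $t\geq s_0$. Applying the induction hypothesis to the prefix $s_0\to\cdots\to s_n$ yields a run $t=t_0\to t_1\to\cdots\to t_n$ in $\Scal$ with $s_i\leq t_i$ for all $i=0,\ldots,n$. In particular $s_n\leq t_n$, so monotonicity applied to the step $s_n\to s_{n+1}$ together with $s_n\leq t_n$ produces a state $t_{n+1}$ with $t_n\to t_{n+1}$ and $s_{n+1}\leq t_{n+1}$. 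Appending this last step gives a run $t_0\to\cdots\to t_{n+1}$ of length $n+1$ that simulates $s_0\to\cdots\to s_{n+1}$ from above, completing the induction.

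There is really no obstacle here; the only thing worth flagging is that the argument uses \emph{only} the monotonicity (compatibility) of $\leq$ with $\to$, and not the well-quasi-ordering assumption bundled into the definition of WSTS — that ingredient will instead be what guarantees termination of the algorithms in \autoref{sec-verif}, rather than the correctness of this particular simulation step.
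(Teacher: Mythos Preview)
Your proof is correct. The paper does not actually spell out a proof of Fact~\ref{fact-repeat}: it simply observes that monotonicity means $\leq$ is a simulation relation in Milner's sense and then states the fact as an immediate consequence. Your induction on $n$ is precisely the standard way to unpack that observation, and your remark that only compatibility (not the wqo assumption) is used is accurate and worth keeping.
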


\begin{remark}
Definition~\ref{def-monotonicity} comes in many variants. For
example, \citet{finkel98b} consider \emph{strict
compatibility} (when $<$, the strict ordering underlying $\leq$, is a
simulation), \emph{transitive compatibility} (when $\leq$ is a weak
simulation), and the definition can further extend to \emph{labelled}
transition systems. These are all inessential variations of the main
idea.
\qed
\end{remark}

Now to the wqo ingredient.
\begin{definition}[Wqo]
\label{def-wqo}
A quasi-order $(S,\leq)$ is \emph{well} (``is a \emph{wqo}'')
if every infinite sequence  $s_0,s_1,s_2,\ldots$ over $S$ contains an
increasing pair $s_i\leq s_j$ for some $i<j$.
Equivalently, $(S,\leq)$ is a wqo if, and only if, every infinite
sequence $s_0,s_1,s_2,\ldots$ over $S$ contains an infinite increasing
subsequence $s_{i_0}\leq s_{i_1}\leq s_{i_2}\leq\cdots$, where
$i_0<i_1<i_2<\cdots$
\end{definition}
We call \emph{good} a sequence that contains an increasing pair,
otherwise it is \emph{bad}. Thus in a wqo all infinite sequences are
good, all bad sequences are finite.
\\

Definition~\ref{def-wqo} offers two equivalent definitions.
Many other characterisations exist~\cite{kruskal72}, and it is an
enlightening exercise to prove their equivalence
\citep[see][Chap.~1]{SS-esslli2012}.
Let us illustrate the usefulness of the alternative definition: for a
dimension $k\in\Nat$, write $\Nat^k$ for the set of $k$-tuples, or
vectors, of natural numbers. For two vectors $\aaa=(a_1,\ldots,a_k)$
and $\bbb=(b_1,\ldots,b_k)$ in $\Nat^k$, we let $\aaa\leq_\times\bbb$
$\equivdef$ $a_1\leq b_1\land \cdots\land a_k\leq b_k$.
\begin{example}[Dickson's Lemma]
\label{lem-dickson}
$(\Nat^k,\leq_\times)$ is a wqo.
\end{example}
\begin{proof}[Proof of Dickson's Lemma]
Consider an infinite sequence $\aaa_1,\aaa_2,\aaa_3,\ldots$ over
$\Nat^k$ and write $\aaa_i=(a_{i,1},\ldots,a_{i,k})$. One can extract
an infinite subsequence $\aaa_{i_1},\aaa_{i_2},\aaa_{i_3},\ldots$ that
is increasing over the first components, i.e., with $a_{i_1,1}\leq
a_{i_2,1}\leq a_{i_3,1}\leq \cdots$, since $(\Nat,\leq)$ is a wqo
(easy to prove, here the first definition suffices). From this infinite
subsequence, one can further extract an infinite subsequence that is
also increasing on the second components (again, using that $\Nat$ is
wqo). After $k$ extractions, one has an infinite subsequence that is
increasing on all components, i.e., that is increasing for
$\leq_\times$ as required.
\end{proof}

We can now give the central definition of this paper:
\begin{definition}[WSTS]
\label{def-wsts}
An ordered transition system $\Scal=(S,\to,\leq)$ is a WSTS
$\equivdef$ $\Scal$ is monotonic and $(S,\leq)$ is a wqo.
\end{definition}

\section{A Running Example: Broadcast Protocols}
\label{sec-broadcast}

As a concrete illustration of the principles behind WSTS, let us
consider distributed systems known as \emph{broadcast
  protocols}~\citep{emerson98,esparza99}.  Such systems gather an
unbounded number of identical finite-state processes running
concurrently, able to spawn new processes, and communicating either
via \emph{rendez-vous}---where two processes exchange a message---or
via \emph{broadcast}---where one process sends the same message to
every other process.  While this may seem at first sight rather
restricted for modelling distributed algorithms, broadcast protocols
have been employed for instance to verify the correction of cache
coherence protocols without fixing a number of participating
processes.\footnote{See also the up-to-date survey of parametrised
  verification problems in \citep{esparza13}.}

Formally, a broadcast protocol is defined as a triple $\Btt=(Q,M,R)$
where $Q$ is a finite set of \emph{locations}, $M$ a finite set
of \emph{messages}, and $R$ is a set of \emph{rules}, that is, tuples
$(q,\mathit{op},q')$ in $Q\times\mathit{Op}\times Q$, each describing an
operation $\mathit{op}$ available in the location $q$ and leading to a
new location $q'$, where $\mathit{op}$ can be a sending (denoted $m{!}$) or a
receiving ($m{?}$) operation of a rendez-vous message $m$ from $M$, or
a sending ($m{!!}$) or receiving ($m{?\!?}$) operation of a broadcast
message $m$ from $M$, or a spawning ($\text{sp}(p)$) of a new
process that will start executing from location $p$.  As usual, we
write $q\xrightarrow{\mathit{op}}_{\Btt}q'$ if $(q,\mathit{op},q')$ is
in $R$.

Figure~\ref{fig-broadcast} displays a toy example where
$Q=\{r,c,a,q,\bot\}$ and $M=\{d,m\}$: processes in location $c$ can
spawn new ``active'' processes in location $a$, while also moving to
location $a$ (a rule depicted as a
double arrow in \autoref{fig-broadcast}).  These active processes are
flushed upon receiving a broadcast of either $m$ (emitted by a process
in location $q$) or $d$ (emitted by a process in location $r$);
location $\bot$ is a sink location modelling process destruction.

\begin{figure}[tbp]
  \centering
  \begin{tikzpicture}[on grid,auto,node distance=2cm]
    \node[state](r){$r$};
    \node[state,right=of r](c){$c$};
    \node[state,below=of c](a){$a$};
    \node[state,right=of c](q){$q$};
    \node[state,right=of q](p){$\bot$};
    \node[below right=1cm and .2cm of c](s){};
    \path
    (r) edge node{$d{!!}$} (c)
    (c) edge[-,bend left=5] (s.north)
    (s.north) edge[bend left=15] (a)
    (s.north) edge[bend right=15] (a)
    (a) edge[bend left=20] node{$m{?\!?}$} (c)
    (c) edge node{$d{?\!?}$} (q)
    (q) edge node{$m{!!}$} (p);
  \end{tikzpicture}
  \caption{A broadcast protocol.\label{fig-broadcast}}
\end{figure}
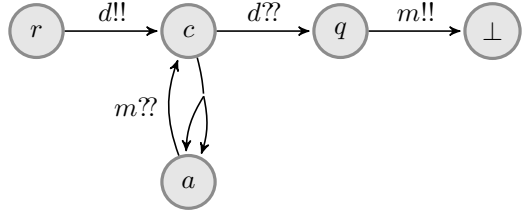

The operational semantics of a broadcast protocol is
expressed as a transition system $\Scal_\Btt=(S,\to)$, where states,
here called configurations, are (finite) multisets of locations in
$Q$, hence $S=\Nat^Q$.  Informally, the intended semantics for a
configuration $s$ in $\Nat^Q$ is to record for each location $q$ in
$Q$ the number of processes $s(q)$ currently in this location.  We use
 a ``sets with duplicates'' notation, like $s=\mset{q_1,\dots,q_n}$ where
some $q_i$'s might be identical, and feel free to write, e.g.,
$\mset{q^3,q'{}^4}$ instead of $\mset{q,q,q,q',q',q',q'}$.
A natural ordering for $\Nat^Q$ is the \emph{inclusion} ordering defined by
$s\subseteq s'\equivdef \forall q\in Q,\,s(q)\leq s'(q)$.  For
instance, $\mset{q,q,q'}\subseteq\mset{q,q,q,q'}$ but
$\mset{q,q',q'}\not\subseteq\mset{q,q,q'}$ if $q\neq q'$.
We further write $s=s_1 + s_2$ for the union (not the lub) of two multisets,
in which case $s-s_1$ denotes $s_2$.

It remains to define how the operations of $\Btt$ update such
a configuration through transitions $s\to s'$ of $\?S_\Btt$:
\begin{description}
\item[rendez-vous step] if
$q_1\xrightarrow{m{!}}_{\Btt}q'_1$ and
$q_2\xrightarrow{m{?}}_{\Btt}q'_2$ for some $m\in M$, then
$s+\mset{q_1,q_2}\to s+\mset{q'_1,q'_2}$  for all $s$ in $\+N^Q$,

\item[spawn step] if $q\xrightarrow{\text{sp}(p)}_{\Btt}q'$, then
$s+\mset{q}\to s+\mset{q',p}$ for
all $s$ in $\+N^Q$,

\item[broadcast step] if
$q_0\xrightarrow{m{!!}}_{\Btt}q'_0$ and
$q_i\xrightarrow{m{?\!?}}_{\Btt}q'_i$ for all $1\leq i\leq k$ (and some
$m$), then
$s+\mset{q_0,q_1,\dots,q_k}\to s+\mset{q'_0,q'_1,\dots,q'_k}$
for all $s$ in $\+N^Q$ that do not contain a potential receiver for the
broadcast, i.e., such that
$s(q)=0$ for all rules of the form  $q\xrightarrow{m{?\!?}}_{\Btt}q'$.

\end{description}

With the protocol of \autoref{fig-broadcast}, the following steps are
possible (with the spawned location or exchanged messages indicated on
the arrows):
\begin{align*}
  \mset{c^2,q,r}&\xrightarrow{a}\mset{a^2,c,q,r}\xrightarrow{a}\mset{a^4,q,r}\xrightarrow{m}\mset{c^4,r,\bot}\xrightarrow{d}\mset{c,q^4,\bot}.
\end{align*}

We have just associated an ordered transition system
$\Scal_\Btt=(\Nat^Q,\to,\subseteq)$ with every broadcast protocol
$\Btt$ and are now ready to prove the following fact.
\begin{fact}
Broadcast protocols are WSTS.
\end{fact}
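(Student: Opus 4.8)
The statement is that every $\Scal_\Btt=(\Nat^Q,\to,\subseteq)$ is a WSTS, so by Definition~\ref{def-wsts} I must establish two things: that $(\Nat^Q,\subseteq)$ is a wqo, and that $\Scal_\Btt$ is monotonic. The first part is immediate: $Q$ is finite, so $\Nat^Q$ is (isomorphic to) $\Nat^{|Q|}$ and the ordering $\subseteq$ is exactly the componentwise ordering $\leq_\times$; hence $(\Nat^Q,\subseteq)$ is a wqo by Dickson's Lemma (Example~\ref{lem-dickson}). So the whole content of the proof is monotonicity.

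For monotonicity I would proceed by a case analysis on the three kinds of transition rule. Suppose $s_1\to s_2$ and $s_1\subseteq t_1$; I want $t_2$ with $t_1\to t_2$ and $s_2\subseteq t_2$. Write $d\eqdef t_1-s_1\in\Nat^Q$ for the ``extra'' processes in $t_1$, so that $t_1=s_1+d$. For a \textbf{rendez-vous step}, $s_1=s+\mset{q_1,q_2}$ and $s_2=s+\mset{q_1',q_2'}$; then $t_1=(s+d)+\mset{q_1,q_2}$, so the same rule applies and yields $t_2\eqdef(s+d)+\mset{q_1',q_2'}=s_2+d\supseteq s_2$. The \textbf{spawn step} is identical: $s_1=s+\mset{q}$, $s_2=s+\mset{q',p}$, take $t_2\eqdef(s+d)+\mset{q',p}=s_2+d$. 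In both cases the firing condition is simply the presence of the consumed multiset, which is preserved under adding $d$.

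The \textbf{broadcast step} is the one subtle case, and I expect it to be the main obstacle, because its firing condition is not merely ``enough processes are present'' but also the negative side condition that $s$ (the context) contains no receiver for $m$, i.e.\ $s(q)=0$ for every rule $q\xrightarrow{m?\!?}_\Btt q'$. When we pass to $t_1=s+d+\mset{q_0,q_1,\dots,q_k}$, the extra $d$ might well contain such receivers, so the \emph{same} instance of the broadcast rule need not be enabled at $t_1$. The fix is to absorb those extra receivers into the broadcast: for each location $q$ with a receive rule $q\xrightarrow{m?\!?}_\Btt q'$, let $t_1$ contain $d(q)$ copies of $q$ beyond those already accounted for; pick for each such copy the target $q'$ of some receive rule on $m$ from $q$, and extend the broadcast to include all of these. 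Formally, write $d=d'+e$ where $e$ collects exactly the components of $d$ on locations that have an $m?\!?$-rule and $d'$ the rest; choose a function assigning to $e$ a multiset $e'$ of corresponding post-locations; then $t_1=s'+\mset{q_0,q_1,\dots,q_k}+e$ where $s'\eqdef s+d'$ has no receiver for $m$ (since $s$ had none and $d'$ lives on non-receiving locations), so the enlarged broadcast rule fires, giving $t_1\to t_2\eqdef s'+\mset{q_0',q_1',\dots,q_k'}+e' = s_2 + (d'+e')$. Since $d'+e'\in\Nat^Q$ we get $s_2\subseteq t_2$, as required. This case analysis exhausts all transitions, so $\Scal_\Btt$ is monotonic, and being also a wqo it is a WSTS.
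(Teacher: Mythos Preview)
Your proof is correct and follows essentially the same approach as the paper: Dickson's Lemma for the wqo part, then monotonicity by case analysis on the three step types, with the broadcast case handled by absorbing the extra potential receivers into the broadcast itself. The only (purely presentational) difference is that the paper first reduces to the case $t_1=s_1+\mset{q}$ of a single extra process and invokes transitivity, whereas you treat a general surplus $d$ in one go; both arguments are equivalent.
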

\begin{proof}
First, $(\Nat^Q,\subseteq)$ is a wqo: since $Q$ is
finite, this is just another instance of Dickson's Lemma.

There remains to check that $\Scal_\Btt$ is monotonic. Formally, this
is done by considering an arbitrary step $s_1\to s_2$ (there are three
cases) and an arbitrary pair $s_1\subseteq t_1$. It is enough to
assume that $t_1=s_1+\mset{q}$, i.e., $t_1$ is just one location
bigger that $s_1$, and to rely on transitivity. If $s_1\to s_2$ is a
rendez-vous step with $s_2=s_1-\mset{q_1,q_2}+\mset{q'_1,q'_2}$, then
$t_1=s_1+\mset{q}$ also has a rendez-vous step $t_1\to
t_2=t_1-\mset{q_1,q_2}+\mset{q'_1,q'_2}$ and one sees that
$s_2\subseteq t_2$ as required. If now if $s_1\to s_2$ is a spawn
step, a similar reasoning proves that $s_1+\mset{q}\to s_2+\mset{q}$.
Finally, when $s_1=s+\mset{q_1,\ldots}\to s_2=s+\mset{q'_1,\ldots}$ is
a broadcast step, one proves that $s_1+\mset{q}\to s_2+\mset{q'}$ when
there is a rule $q\xrightarrow{m{?\!?}}_{\Btt}q'$, or when $q$ is not
a potential receiver and $q'=q$.
\end{proof}
  One can show that the protocol depicted in \autoref{fig-broadcast}
  always terminates, this starting from any initial configuration.
 Indeed, consider any sequence of steps $s_0\to s_1\to
  \cdots\to s_i\to\cdots$, write each configuration under the form
  $s_i=\mset{a^{n_{a,i}},c^{n_{c,i}},q^{n_{q,i}},r^{n_{r,i}},\bot^{n_{\bot,i}}}$, and
   compare any two $s_i$ and $s_j$ with $i<j$:
  \begin{itemize}
  \item either only spawn steps occur along the segment $s_i\to s_{i+1}\to \cdots\to
  s_j$,
  thus $n_{c,j}<n_{c,i}$,
  \item or at least one $m$ has been broadcast but no $d$ has been
  broadcast, thus $n_{q,j}<n_{q,i}$,
  \item or at least one $d$ has been broadcast, and then
  $n_{r,j}<n_{r,i}$.
  \end{itemize}
  Thus in all cases, $s_i\not\subseteq s_j$, i.e.\ the sequence 
$s_0,s_1,\ldots$ is bad.  Since $(\+N^Q,\subseteq)$ is a
  wqo, there is no infinite bad sequence, hence no infinite run.

\begin{remark}
  The above property is better known as \emph{structural termination}, or
  also \emph{uniform termination}, i.e., ``termination from any starting
  configuration,'' and is sometimes confusingly called just ``termination''
  in the context of broadcast protocols. Structural termination is
  undecidable for broadcast protocols and related
  models~\citep{esparza99,mayr03,phs-rp10}. By contrast, the usual meaning
  of \emph{termination} is, for a TS $\Scal$ and a given starting state
  $s_\init\in S$, whether all runs starting from $s_\init$ are finite. For
  broadcast protocols, and WSTS in general, termination will be shown
  decidable in \autoref{ssec-termination}.\hfill\qed
\end{remark}

\section{Verification of WSTS}
\label{sec-verif}
In this section we present the two main generic decision algorithms
for WSTS. We strive for a presentation that abstracts away from
implementation details, and that can directly be linked to the
complexity analysis we describe in the following sections. The general
idea is that these algorithms can be seen as an exhaustive search for
some kind of bad sequences.

\subsection{Termination}
\label{ssec-termination}

There is a generic algorithm deciding Termination on WSTS. The
algorithm has been adapted and extended to show decidability of
Inevitability (of which Termination is a special case), Finiteness
(aka Boundedness), or Regular Simulation,
see~\cite{abdulla2000c,finkel98b}.

\begin{lemma}[Finite Witnesses for Infinite Runs]
\label{lem-fin-wit-termin}
A WSTS $\Scal$ has an infinite run from $s_\init$ if, and only if, it has
a finite run from $s_\init$ that is a good sequence.
\end{lemma}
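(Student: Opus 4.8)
The plan is to prove the two implications separately: the ``only if'' part will use only that $(S,\leq)$ is a wqo, while the ``if'' part will use monotonicity, conveniently packaged as Fact~\ref{fact-repeat}. Both ingredients of the WSTS definition thus get used, one for each direction.

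For the ``only if'' direction, I would start from an infinite run $s_\init = s_0 \to s_1 \to s_2 \to \cdots$ and read its sequence of visited states as an infinite sequence over $S$. By Definition~\ref{def-wqo} there are indices $i<j$ with $s_i \leq s_j$, so the finite prefix $s_0 \to s_1 \to \cdots \to s_j$ is a finite run from $s_\init$ that contains an increasing pair, i.e.\ a good one. Nothing more is needed here.

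For the ``if'' direction, suppose we are given a finite good run $\rho = (s_0 \to \cdots \to s_n)$ with $s_0 = s_\init$, and fix indices $i<j$ witnessing goodness, so $s_i \leq s_j$. The heart of the argument is the following claim: \emph{for every state $t \geq s_i$ there is a run of length $j-i$ from $t$ ending in some state $\geq s_i$.} To prove it, I would apply Fact~\ref{fact-repeat} to the segment $\sigma = (s_i \to s_{i+1} \to \cdots \to s_j)$, which is a run starting at $s_i$, and to the state $t \geq s_i$: this yields a run $t = u_i \to u_{i+1} \to \cdots \to u_j$ of the same length $j-i$ with $s_k \leq u_k$ for all $k \in \{i,\dots,j\}$; in particular $u_j \geq s_j \geq s_i$, so $u_j$ is the required endpoint.

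Iterating the claim from $t = s_i$ then produces an infinite run from $s_i$, obtained by concatenating infinitely many such segments, each of length $j-i$; prepending the prefix $s_0 \to \cdots \to s_i$ of $\rho$ gives an infinite run from $s_\init$, as desired. The one step deserving attention — and the only place where I expect one could slip — is that this concatenation really is infinite, i.e.\ that the loop $\sigma$ contains at least one transition; this is exactly where the strict inequality $i<j$ in the definition of a good sequence is used. Everything else is a routine unwinding of the definitions.
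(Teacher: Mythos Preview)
Your proof is correct and follows essentially the same approach as the paper's: the wqo property of $(S,\leq)$ handles the ``only if'' direction by extracting an increasing pair from any infinite run, and Fact~\ref{fact-repeat} handles the ``if'' direction by repeatedly simulating the loop segment $s_i\to\cdots\to s_j$ from above to extend the run indefinitely. Your explicit remark that the strict inequality $i<j$ guarantees a nontrivial loop is a nice addition that the paper leaves implicit.
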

\begin{proof}
Obviously, any infinite run $s_\init=s_0\to s_1\to s_2\to \cdots$ is a good
sequence by property of $\leq$ being a wqo (see Def.~\ref{def-wqo}).
Once a pair $s_i\leq s_j$ is identified, the finite prefix that stops
at $s_j$ is both a finite run and a good sequence.

Reciprocally, given a finite run $s_0\to s_1 \to \cdots \to s_i \to
\cdots \to s_j$ with $i<j$ and $s_i\leq s_j$, Fact~\ref{fact-repeat}
entails the existence of a run $s_j\to s_{j+1}\to \cdots \to s_{2j-i}$
that simulates $s_i\to \cdots \to s_j$ from above. Hence the finite
run can be extended to some $s_0\to\cdots\to s_i\to\cdots\to s_j\to
\cdots\to s_{2j-i}$ with $s_i\leq s_{2j-i}$. Repeating this extending
process ad infinitum, one obtains an infinite run.
\end{proof}

Very little is needed to turn Lemma~\ref{lem-fin-wit-termin} into a
decidability proof for Termination. We shall make some
minimal \emph{effectiveness assumptions}: (EA1) the set of states $S$ is
recursive; (EA2) the function $s\mapsto \Post(s)$, that associates with
any state its image by the relation $\to$, is
computable (and image-finite, aka finitely branching); and (EA3) the
wqo $\leq$ is decidable. We say that $\Scal$ is an \emph{effective}
WSTS when all three assumptions are fulfilled. Note that (EA1--2) hold
of most computational models, starting with Turing machines and
broadcast protocols, but we have to spell out these assumptions at some point since
Def.~\ref{def-wsts} is abstract and does not provide any algorithmic
foothold. %

We can now prove the decidability of Termination for effective WSTS.
Assume $\Scal=(S,\to,\leq)$ is effective. We are given some starting
state $s_\init\in S$. The existence of an infinite run is
semi-decidable since infinite runs admit finite witnesses by
Lemma~\ref{lem-fin-wit-termin}. (Note that we rely on all three
effectiveness assumptions to guess a finite sequence
$(s_\init=)s_0,\ldots,s_i,\ldots,s_j$ of states, check that it is
indeed a run of $\Scal$, and that it is indeed a good sequence.)
Conversely, if all runs from $s_\init$ are finite, then there are only
finitely many of them (by K\H{o}nig's Lemma, since $\Scal$ is finitely
branching), and it is possible to enumerate all these runs by
exhaustive simulation, thanks to (EA2). Thus Termination is
semi-decidable as well. Finally, since the Termination problem and its
complement are both semi-decidable, they are decidable.

\subsection{Coverability}
\label{ssec-covering}

After Termination, we turn to the decidability of Coverability, a
slightly more involved result that is also more useful for practical
purposes: the decidability of Coverability opens the way to the
verification of safety properties and many other properties defined by
fixpoints, see~\cite{BS-fmsd2012}.

Recall that, for an ordered TS $\Scal$, Coverability is the question, 
given a starting state $s_\init\in S$ and a target state
$t\in S$, whether there is a run from $s_\init$ that eventually covers
$t$, i.e., whether there is some $s$ reachable from $s_\init$ with
$s\geq t$.  We call any finite run $s_0\to s_1\to \cdots \to s_n$
s.t.\ $s_n\geq t$ a \emph{covering run (for $t$)}.

Rather than using covering runs to witness Coverability, we shall use
``pseudoruns''. Formally, a \emph{pseudorun} is a sequence
$s_0,\ldots,s_n$ such that, for all $i=1,\ldots,n$, $s_{i-1}$ can
cover $s_i$ \emph{in one step}, i.e., $s_{i-1}\to t_i$ for some
$t_i\geq s_i$. In particular, any run is also a pseudorun. And the
existence of a pseudorun $s_0,\ldots,s_n$ with $s_n\geq t$ witnesses
the existence of covering runs from any $s_\init\geq s_0$
(proof: by repeated use of Fact~\ref{fact-repeat}).

A pseudorun $s_0,\ldots,s_n$ is \emph{minimal} if, for all
$i=1,\ldots,n$, $s_{i-1}$ is minimal among all the states from where
$s_i$ can be covered in one step (we say that $s_{i-1}$ is a \emph{minimal
pseudopredecessor} of $s_i$).
\begin{lemma}[Minimal Witnesses for Coverability]
\label{lem-cover}
If $\Scal$ has a covering run from $s_\init$, it has in particular a
minimal pseudorun $s_0,s_1,\ldots,s_n$ with $s_0\leq s_\init$,
$s_n=t$, and such that the \emph{reverse sequence}
$s_n,s_{n-1},\ldots,s_0$ is bad (we say that $s_0,\ldots,s_n$ is ``revbad'').
\end{lemma}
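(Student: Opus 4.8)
The plan is to obtain the required witness as a \emph{shortest} pseudorun reaching $t$, and then to observe that a shortest (suitably minimal) such pseudorun cannot fail to be revbad. The single ingredient that makes everything go through is the following consequence of monotonicity (Def.~\ref{def-monotonicity}): for every state $u$, the set $\mathrm{PPre}(u)$ of its pseudopredecessors --- the states $v$ with $v\to v'$ for some $v'\geq u$ --- is \emph{upward-closed}, and moreover $u'\leq u$ implies $\mathrm{PPre}(u)\subseteq\mathrm{PPre}(u')$. Indeed, if $v\to v'\geq u$ and $v\leq w$, then monotonicity yields $w\to w'$ with $w'\geq v'\geq u\geq u'$. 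Since $(S,\leq)$ is a wqo it is well-founded --- an infinite strictly descending chain $s_0>s_1>\cdots$ would be an infinite bad sequence, contradicting Def.~\ref{def-wqo} --- so every nonempty upward-closed subset of $S$ has minimal elements and each of its members lies above one of them. Note finally that ``$v$ is a minimal pseudopredecessor of $u$'' means precisely ``$v$ is a minimal element of $\mathrm{PPre}(u)$''.

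I would then set up two constructions. \textbf{Existence.} A covering run $s_\init=r_0\to\cdots\to r_m\geq t$ gives, after replacing its last state by $t$, a pseudorun $r_0,\ldots,r_{m-1},t$ from $s_\init$ to $t$ (its final ``step'' is $r_{m-1}\to r_m\geq t$, the others are genuine steps; if already $s_\init\geq t$, the one-state pseudorun $(t)$ works). Hence pseudoruns from some state $\leq s_\init$ to $t$ exist; let $n\in\Nat$ be the least length of such a pseudorun. \textbf{Minimalisation.} From any pseudorun $u_0,\ldots,u_\ell$ with $u_\ell=t$ one builds, by downward induction on $i$, a \emph{minimal} pseudorun $v_0,\ldots,v_\ell$ of the same length with $v_\ell=t$ and $v_i\leq u_i$ for all $i$: set $v_\ell:=t$; given $v_{i+1}\leq u_{i+1}$, observe that $u_i\in\mathrm{PPre}(u_{i+1})\subseteq\mathrm{PPre}(v_{i+1})$, so that $\mathrm{PPre}(v_{i+1})$ is nonempty and contains a minimal element $v_i\leq u_i$. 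Applying this to a length-$n$ pseudorun from a state $\leq s_\init$ to $t$ produces a minimal pseudorun $s_0,\ldots,s_n$ with $s_0\leq s_\init$ and $s_n=t$.

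To finish, I would check that this shortest minimal pseudorun $s_0,\ldots,s_n$ is revbad. If not, $s_i\leq s_j$ for some $j<i$. When $i<n$, upward-closure of $\mathrm{PPre}(s_{i+1})$ and $s_i\in\mathrm{PPre}(s_{i+1})$ give $s_j\in\mathrm{PPre}(s_{i+1})$, so $s_0,\ldots,s_j,s_{i+1},\ldots,s_n$ is again a pseudorun from a state $\leq s_\init$ to $t$, of length $n-(i-j)<n$. When $i=n$, we have $t=s_n\leq s_j$, hence $s_{j-1}\in\mathrm{PPre}(s_j)\subseteq\mathrm{PPre}(t)$, and $s_0,\ldots,s_{j-1},t$ --- which is $(t)$ when $j=0$, the bound $t\leq s_0\leq s_\init$ being available then --- is a pseudorun from a state $\leq s_\init$ to $t$, of length $j<n$. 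Either way the minimality of $n$ is contradicted, so $s_0,\ldots,s_n$ is revbad, as required. The only mildly delicate point is keeping the last state equal to $t$ while shortening, which is what forces the small case split $i<n$ versus $i=n$; everything else is driven by the upward-closure of $\mathrm{PPre}$ together with the well-foundedness of the wqo.
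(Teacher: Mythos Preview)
Your proof is correct and rests on the same two moves as the paper's --- shortening when the reverse sequence is good, and replacing states by minimal pseudopredecessors --- but you organise them differently. The paper starts from the given covering run and \emph{iterates}: whenever the current pseudorun fails to be revbad it shortens it, whenever some $s_{i-1}$ is not a minimal pseudopredecessor of $s_i$ it lowers it, and then argues (somewhat loosely, ``after at most polynomially many replacements'') that this process terminates. You instead \emph{globalise}: take a shortest pseudorun from below $s_\init$ to $t$, minimalise it in a single backward pass, and derive revbad by contradiction with shortestness. Your decomposition makes termination automatic (least element of a nonempty subset of $\Nat$, plus a finite downward induction), and your explicit isolation of the upward-closure and antitonicity of $\mathrm{PPre}$ is cleaner than the paper's implicit reliance on these facts; on the other hand, the paper's iterative phrasing stays closer to how the backward-chaining algorithm of \autoref{ssec-covering} actually computes $\MinPPre^\ast(t)$.
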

\begin{proof}
Assume that $s_\init=s_0\to s_1\to \cdots\to s_n$ is a covering run.
Replacing $s_n$ by $t$ gives a pseudorun ending in $t$. We now show
that if the pseudorun is not minimal or not revbad,
then there is a ``smaller'' pseudorun.

First, assume that $s_n,s_{n-1},\ldots,s_0$ is not bad. Then $s_i\geq
s_j$ for some $0\leq i<j\leq n$ and
$s_0,s_1,\ldots,s_{i-1},s_j,s_{j+1},\ldots,s_n$ is again a pseudorun,
shorter in length (note that $s_n=t$ is unchanged, while $s_0$ may
have been replaced by a smaller $s_j$ in the case where $i=0$). If now
$s_0,s_1,\ldots,s_n$ is not minimal, i.e., if some $s_{i-1}$ is not a
minimal pseudopredecessor of $s_i$, we may replace $s_{i-1}$ by some
other pseudopredecessor $s'_{i-1}\leq s_{i-1}$ that is minimal (since
$(S,\leq)$ is wqo, hence well-founded, its non-empty subsets do have
minimal elements) and $s_0,\ldots,s_{i-2},s'_{i-1},s_i,\ldots,s_n$ is
again a pseudorun (where $s_n=t$ as before and where $s_0$ may have
been replaced by a smaller $s'_0$). Repeating such shortening and
lowering replacements as long as possible is bound to terminate (after
at most polynomially many replacements). The pseudorun we end up with
is minimal, revbad, has $s_0\leq s_\init$ and $s_n=t$ as claimed.
\end{proof}
Turning Lemma~\ref{lem-cover} into a decidability proof is similar to
what we did for Termination. This time we make the following
effectiveness assumptions: (EA1) and (EA3) as above, with (EA2') the
assumption that the function $\MinPPre$---that associates with any
state its finite set of minimal pseudopredecessors---is
computable.\footnote{Recall that any subset of a wqo has only finitely
  many minimal elements up to the equivalence given by $s\equiv
  s'\equivdef s\leq s'\leq s$.} The set of all minimal revbad
pseudoruns ending in $t$ is finite (K\H{o}nig's Lemma again: finite
branching of the tree is ensured by minimality of the pseudoruns,
while finite length of the branches is ensured by the restriction to
revbad pseudoruns). This set of pseudoruns can be built effectively,
starting from $t$ and applying $\MinPPre$ repeatedly, but it is enough
to collect the states that occur along them, using a standard
backward-chaining scheme. We write $\MinPPre^\ast(t)$ to denote the
set of all these states: once they have been computed, it only remains
to be checked whether $s_\init$ is larger than one of them, using
(EA3) once more.

We conclude by observing that assumption (EA2'), though less
natural-looking than (EA2), is satisfied in most computational models.
As Ex.~\ref{ex-control-suc} shows for the case of broadcast protocols,
computing $\MinPPre(s)$ is often a simple case of finding the minimal
solutions to a simple inverse problem on rewrite rules.

\subsection{What is the Complexity of WSTS Verification?}

\label{ssec-ex-broadcast}
  One aspect of the algorithms given in this section we have swept
  under the rug is how expensive they can be.  This will be the topic
  of the next two sections, but as an appetiser, let us consider how
  long the Termination algorithm can run on the broadcast protocol
  of \autoref{fig-broadcast}.

  Let us ignore the number of processes in the sink location $\bot$.
  The protocol from \autoref{fig-broadcast} allows the following steps
  when spawn steps are performed as long as possible before
  broadcasting $m$:
  \begin{align*}
    \mset{c^n,q}&\xrightarrow{a^n}\mset{a^{2n},q}\xrightarrow{m}\mset{c^{2n}}\;.
    \intertext{Such a greedy sequence of message thus doubles the
      number of processes in $c$ and removes one single process from
      $q$.  Iterating such sequences as long as some process is in $q$
      before broadcasting $d$ then leads to:}
    \mset{c^{2^0},q^n,r}&\xrightarrow{a^{2^0}m}\mset{c^{2^1},q^{n-1},r}\xrightarrow{a^{2^1}m}\mset{c^{2^2},q^{n-2},r}\\&\cdots\to\mset{c^{2^{n-1}},q,r}\xrightarrow{a^{2^{n-1}}m}\mset{c^{2^n},r}\xrightarrow{d}\mset{c^{2^0},q^{2^n}}\;.
    \intertext{Such iterations thus implement an exponentiation
      of the number of processes in $q$ in exchange for decrementing
      the number of processes in $r$ by one.  Repeating this kind of
      sequences therefore allows:}
    \mset{c,q,r^n}&\to^\ast\mset{c,q^{\text{tower}(n)}}\:,
  \end{align*}
  where $\text{tower}(0)\eqdef 1$ and $\text{tower}(n+1)\eqdef
  2^{\text{tower}(n)}$: although it always terminates, the broadcast
  protocol of \autoref{fig-broadcast} can exhibit sequences of steps
  of non-elementary length.  This also entails a non-elementary lower
  bound on the Termination algorithm when run on this protocol: since
  the system terminates, all the runs need to be checked, including
  this particular non-elementary one.
  \qed

\section{Upper Bounds on Complexity}
\label{sec-upb}
Although the theory of well-structured  systems provides 
generic algorithms for numerous verification problems, it might
seem rather unclear, what the computational cost of running these
algorithms could be---though we know their complexity can be
considerable (recall \autoref{ssec-ex-broadcast}).  Inspecting the
termination arguments in \autoref{sec-verif}, we see that the critical
point is the finiteness of bad sequences.  Unfortunately, the wqo
definition does not mention anything about the \emph{length} of such
sequences, but merely asserts that they are finite.

It turns out that very broadly applicable hypotheses suffice in order
to define a maximal length for bad sequences (\autoref{sub-control}),
which then gives rise to so-called \emph{length function theorems}
bounding such lengths using ordinal-indexed functions
(\autoref{sub-lft}).  These upper bounds allow for a classification of
the power of many  WSTS models in complexity-theoretic terms
(\autoref{sub-fgcc}), and also lead to simplified WSTS algorithms
that take advantage of the existence of computable upper bounds on the
length of bad sequences (\autoref{sub-comb-algo}).

\subsection{Controlled Sequences}\label{sub-control}

\subsubsection{The Length of Bad Sequences.}
If we look at a very simple quasi-order, namely $(Q,{=})$ with a
finite support $Q$ and equality as ordering---which is a wqo by the
pigeonhole principle---, we can only exhibit bad sequences with length
up to $\#_{\!Q}$, the cardinality of $Q$.  But things start going awry as soon
as we consider infinite wqos; for instance
\begin{equation}\label{eq-bad-nat}
\tag{S1}
  n, n-1, n-2, \dots, 0
\end{equation}
is a bad sequence over $(\+N,{\leq})$ for every $n$ in $\+N$, i.e.\
the length of a bad sequence over $(\+N,{\leq})$ can be arbitrary.
Even if we restrict ourselves to bad sequences where the first element
is not too large, we can still build arbitrarily long sequences: for
instance, over $(\+N^Q,\subseteq)$ with $Q=\{p,q\}$,
\begin{equation}\label{eq-bad-Nat2}
\tag{S2}
  \mset{p}, \mset{q^{n}},\mset{q^{n-1}},\dots,\mset{q},\emptyset
\end{equation}
is a bad sequence of length $n+2$.

\subsubsection{Controlling Sequences.}
Here is however a glimpse of hope: the sequence \eqref{eq-bad-Nat2}
cannot be the run of a broadcast protocol, due to the
sudden ``jump'' from $\mset{p}$ to an arbitrarily large configuration
$\mset{q^{n}}$.  More generally, the key insight is that, in an
algorithm that relies on a wqo for termination, successive states
cannot jump to arbitrarily large sizes, because these states are
constructed algorithmically: we call such sequences \emph{controlled}.

Let $(A,\leq_A)$ be a wqo.  Formally, we posit a \emph{norm}
$|.|_A{:}\,A\to\+N$ on the elements of our wqo, which we require to
be \emph{proper}, in that only finitely many elements have norm $n$:
put differently, $A_{\leq n}\eqdef\{x\in A\mid |x|_A\leq n\}$ must be
finite for every $n$.  For instance, $|s|_{\+N^Q}\eqdef\max_{q\in
Q}s(q)$ is a proper norm for $\+N^Q$.

Given an increasing \emph{control function} $g{:}\,\+N\to\+N$, we say
that a sequence $x_0,x_1,\dots$ over $A$ is \emph{$(g,n_0)$-controlled} if the
norm of $x_i$ is no larger that the $i$th iterate of $g$ applied to
$n_0$: $|x_i|_A\leq g^i(n_0)$ for all $i$.  Thus $g$ bounds
the growth of the elements in the sequence, and $n_0$ is a bound on the
initial norm $|x_0|_A$.

\begin{example}[Controlled Successors in a Broadcast Protocol]
  \label{ex-control-suc}
  Let us see how these definitions work on broadcast protocols.
  First, on the sequences of successors built in the Termination
  algorithm: If  $s\to s'$ is a rendez-vous
  step, then $|s'|_{\+N^Q}\leq 2+|s|_{\+N^Q}$, corresponding to the case where
  the two processes involved in the rendez-vous move to the same
  location.  If $s\to s'$ is a broadcast step, then $|s'|_{\+N^Q}\leq
  \#_{\!Q}\cdot |s|_{\+N^Q}$, corresponding to the case where all the processes in
  $s$ (of which there are at most $\#_{\!Q}\cdot |s|_{\+N^Q}$) enter the same
  location.  Finally, spawn steps only incur $|s'|_{\+N^Q}\leq
  |s|_{\+N^Q}+2$.  Thus $g(n)\eqdef\#_{\!Q}\cdot n$ defines a
  control function for any run in a broadcast
  protocol with $\#_{\!Q}\geq 2$ locations, provided the initial norm
  $n_0$ is chosen large enough.\qed
\end{example}

\begin{example}[Controlled Minimal Pseudopredecessors in a Broadcast
  \label{ex-control-pred}
  Protocol] Now for the minimal pseudopredecessors built in the course of
  the Coverability algorithm: %
  Assume $|t|_{\+N^Q}\leq n$ and $s\to s'\geq t$ is
  a step from some minimal $s$:
  \begin{description}
  \item[rendez-vous step]  If $s'=(s-\mset{q_1,q_2}+\mset{q'_1,q'_2})$
  in a rendez-vous, then
  \begin{itemize}
  \item either $\mset{q'_1,q'_2}\subseteq t$ and
  thus $s=t-\mset{q'_1,q'_2}+\mset{q_1,q_2}$ and
  $|s|_{\+N^Q}\leq n$,
  \item or $q'_i\not\in t$ for exactly one $i$ among $\{1,2\}$, hence
  $s=t-\mset{q_{1-i}}+\mset{q_1,q_2}$ and $|s|_{\+N^Q}\leq
  n+1$,
  \item or $q'_i\not\in t$ for any $i\in\{1,2\}$ and $|s|_{\+N^Q}\leq
  n+2$ (note however that $s\subseteq t$ in this case and the
  constructed sequence would not be bad).
  \end{itemize}
  \item[broadcast step]Assume
  $s'\!=(s-\mset{q_0,q_1,\dots,q_k}+\mset{q'_0,q'_1,\dots,q'_k})$ with
  $q_0\!\!\xrightarrow{m{!!}}_{\Btt}q'_0\!$ the corresponding broadcast send
  rule.  Because $s$ is minimal, $\mset{q'_1,\dots,q'_k}\!\subseteq t$,
  as otherwise a smaller $s$ could be used.  Hence,
  \begin{itemize}
  \item either $q'_0\in t$, and then $s=t-\mset{q'_0,q'_1,\dots,q'_k}+\mset{q_0,q_1,\dots,q_k}$ and
  $|s|_{\+N^Q}\leq n$,
  \item or $q'_0\not\in t$, and then
  $s=t-\mset{q'_1,\dots,q'_k}+\mset{q_0,q_1,\dots,q_k}$ and
  $|s|_{\+N^Q}\leq n+1$.
  \end{itemize}
  \item[spawn step] similar to a rendez-vous step, $|s|_{\+N^Q}\leq n+1$.
  \end{description}
  Therefore, $g(n)\eqdef n+2$ defines a control function for any
  sequence of minimal pseudopredecessor steps in any broadcast protocol.\qed
\end{example}

\subsubsection{Length Functions.}
The upshot of these definitions is that, unlike in the uncontrolled
case, there is a longest $(g, n_0)$-controlled bad sequence over any
normed wqo $(A,\leq_A)$: indeed, we can organise such sequences in a tree
by sharing common prefixes; this tree has
\begin{itemize}
\item finite branching, bounded by the cardinal of $A_{\leq
    g^i(n_0)}$ for a node at depth $i$, and
\item no infinite branches thanks to the wqo property.
\end{itemize}
By K\H{o}nig's Lemma, this tree of bad sequences is therefore finite,
of some height $L_{g,n_0,A}$ representing the length of the maximal $(g,
n_0)$-controlled bad sequence(s) over $A$. In the following, since we
are mostly interested in this length as a function of the initial norm
$n_0$, we will see this as a \emph{length function} $L_{g,A}(n)$; our
purpose will then be to obtain complexity bounds on $L_{g,A}$
depending on $g$ and $A$.

\subsection{Length Function Theorems}\label{sub-lft}

Now that we are empowered with a suitable definition for the maximal
length $L_{g,A}(n)$ of $(g,n)$-controlled bad sequences over $A$, we
can try our hand at proving \emph{length function theorems}, which
provide constructible functions bounding $L_{g,A}$ for various normed
wqos $(A,\leq)$.  Examples of length function theorems can be found
\begin{itemize}
\item in \citep{mcaloon,clote,FFSS-lics2011,abriola,SS-esslli2012} for
Dickson's Lemma, i.e.\ for $(\+N^Q,\subseteq)$ for some finite
$Q$, which is isomorphic to $(\+N^{\#_{\!Q}},\leq_\times)$,%
\item in \citep{abriola} for $(\?P_{\!f}(\+N^d),\preceq)$ the set of finite
subsets of $\+N^d$ with the majoring ordering defined by $X\preceq
Y\equivdef \forall x\in X,\exists y\in Y, x\leq_\times y$,
\item in \citep{weiermann94,cichon98,SS-icalp2011} for Higman's Lemma, i.e.\ for
$(\Sigma^\ast,\leq_\ast)$ the set of finite sequences over a finite
alphabet $\Sigma$ with the subword embedding $\leq_\ast$,
\item in \citep{weiermann94} for Kruskal's Tree Theorem, i.e.\ for $(T,\leq_T)$
the set of finite unranked ordered trees with the homeomorphic
embedding $\leq_T$.
\end{itemize}
These theorems often differ in the hypotheses they put on $g$, the
tightness of the upper bounds they provide, and on the simplicity of
their proofs (otherwise the results of \citet{weiermann94} for
Kruskal's Tree Theorem would include all the others).  We will try to
convey the flavour of the theorems
from \citep{SS-icalp2011,SS-esslli2012} here.

Starting again with the case of a finite wqo $(Q,=)$, and setting
$|x|_Q\eqdef 0$ for all $x$ in $Q$ as the associated norm, we find
immediately that, for all $g$ and $n$,
\begin{align}
  L_{g,Q}(n) &= \#_{\!Q}
\intertext{by the pigeonhole principle.  Another easy example is $(\+N,\leq)$
with norm $|k|_{\+N}\eqdef k$:}\label{eq-badd-nat}
  L_{g,\+N}(n) &= n+1\;,
\end{align}
the bad sequence \eqref{eq-bad-nat} being maximal.

We know however from \autoref{ssec-ex-broadcast} that very long bad
sequences can be constructed, so we should not hope to find such
simple statements for $(\+N^Q,\subseteq)$ and more complex wqos.
The truth is that the ``$\text{tower}$'' function we used in
\autoref{ssec-ex-broadcast} is really benign compared to the kind of
upper bounds provided by length function theorems.  Such functions of
enormous growth have mainly been studied in the context of
subrecursive hierarchies and reverse mathematics (see
e.g.\ \citep[Chap.~4]{sw12} for further reference); let us summarily
present them.

\subsubsection{Ordinal Indexed Functions.}
An idea in order to build functions of type $\+N\to\+N$ with faster
and faster growths is to iterate smaller functions a number of times
that depends on the argument---this is therefore a form of
diagonalisation.  In order to keep track of the diagonalisations, we
can index the constructed functions with ordinals, so that
diagonalisations occur at limit ordinals.

\paragraph{\textit{Ordinal Terms.}}
First recall that ordinals $\alpha$ below $\ezero$ can be denoted as terms
in \emph{Cantor Normal Form}, aka CNF:
\begin{align*}%
  \alpha = \omega^{\beta_1}\cdot c_1+\cdots
  +\omega^{\beta_n}\cdot c_n&\text{ where }\alpha > \beta_1
  >\cdots> \beta_n\text{ and }\omega>c_1,\ldots,c_n>0\;.%
\end{align*}
In this representation, $\alpha=0$ if and only if $n=0$.  An ordinal
with {CNF} of the form $\alpha'+1$ (i.e.\ with $n>0$ and $\beta_n=0$)
is called a \emph{successor} ordinal, and otherwise if $\alpha>0$ it
is called a \emph{limit} ordinal, and can be written as
$\gamma+\omega^\beta$ by setting $\gamma=\omega^{\beta_1}\cdot
c_1+\cdots +\omega^{\beta_n}\cdot(c_n-1)$ and $\beta=\beta_n$.  We
usually write ``$\lambda$'' to denote a limit ordinal.

A \emph{fundamental sequence} for a limit ordinal $\lambda$ is a
sequence $(\lambda(x))_{x<\omega}$ of ordinals with supremum
$\lambda$, with a standard assignment defined inductively by
\begin{align}
  \label{eq-fund-def}
  (\gamma+\omega^{\beta+1})(x)&\eqdef\gamma+\omega^\beta\cdot (x+1)\;,&
  (\gamma+\omega^{\lambda})(x)&\eqdef\gamma+\omega^{\lambda(x)}\;.
\end{align}
This is one particular choice of a fundamental sequence, which
verifies e.g.\ $0<\lambda(x)<\lambda(y)$ for all $x<y$.  For
instance, $\omega(x)=x+1$,
$(\omega^{\omega^4}+\omega^{\omega^3+\omega^2})(x)=\omega^{\omega^4}+\omega^{\omega^3+\omega\cdot
  (x+1)}$.

\paragraph{\textit{Hardy Hierarchy.}} Let $h{:}\,\+N\to\+N$ be an
increasing function.  The \emph{Hardy hierarchy}
$(h^\alpha)_{\alpha<\ezero}$ controlled by $h$ is defined
inductively by
\begin{align}\label{eq-hardy-rel}
  h^0(x)&\eqdef x\;,&
  h^{\alpha+1}(x)&\eqdef h^\alpha\left(h(x)\right),&
  h^\lambda(x)&\eqdef h^{\lambda(x)}(x)\;.
\end{align}
Observe that $h^k$ for some finite $k$ is the $k$th iterate of $h$ (by
using the first two equations solely).  This intuition carries over:
$h^\alpha$ is a transfinite iteration of the function $h$, using
diagonalisation to handle limit ordinals.  For instance, starting with
the successor function $H(x)\eqdef x+1$, we see that a first
diagonalisation yields $H^\omega(x)=H^{x+1}(x)=2x+1$.  The next
diagonalisation occurs at $H^{\omega\cdot
  2}(x)=H^{\omega+x+1}(x)=H^{\omega}(2x+1)=4x+3$.  Fast-forwarding a
bit, we get for instance a function of exponential growth
$H^{\omega^2}(x)=2^{x+1}(x+1)-1$, and later a non elementary function
$H^{\omega^3}$, an ``Ackermannian'' non primitive-recursive function
$H^{\omega^\omega}$, and an ``hyper-Ackermannian'' non
multiply-recursive function $H^{\omega^{\omega^\omega}}$.  Hardy
functions are well-suited for expressing large iterates of a control
function, and therefore for bounding the norms of elements in a
controlled bad sequence.

\paragraph{\textit{Cicho\'n Hierarchy.}}  A variant of the Hardy
functions is the \emph{Cicho\'n hierarchy}
$(h_\alpha)_{\alpha<\ezero}$ controlled by $h$~\citep{cichon98},
defined by
\begin{align}\label{eq-cichon-rel}
  h_0(x)&\eqdef 0\;,&
  h_{\alpha+1}(x)&\eqdef 1+h_\alpha\left(h(x)\right),&
  h_\lambda(x)&\eqdef h_{\lambda(x)}(x)\;.
\end{align}
For instance, $h_d(x)=d$ for all finite $d$, thus $h_\omega(x)=x+1$
regardless of the choice of the function $h$.  One can check that
$H^\alpha(x)=H_\alpha(x)+x$ when employing the successor function $H$;
in general $h^\alpha(x)\geq h_\alpha(x)+x$ since $h$ is assumed to be
increasing. 

This is the hierarchy we are going to use for our statements of length
function theorems: a Hardy function $h^\alpha$ is used to bound the
maximal norm of an element in a bad sequence, and the corresponding
Cicho\'n function $h_\alpha$ bounds the length of the bad sequence
itself, the two functions being related for all $h$, $\alpha$, and $x$
by
\begin{equation}\label{eq-hardy-cichon}
  h^\alpha(x) = h^{h_\alpha(x)}(x)\;.
\end{equation}

\subsubsection{Length Functions for Dickson's Lemma.}
We can now provide an example of a length function theorem for a
non-trivial wqo: Consider $(\+N^Q,\subseteq)$ and some control
function $g$.  Here is one of the parametric bounds proved in
\citep[Chap.~2]{SS-esslli2012}:\footnote{A more general version of
  \autoref{thm-lft} in \citep{SS-icalp2011} provides $h_{o(A)}(n)$
  upper bounds for $(g,n)$-controlled bad sequences over wqos
  $(A,\leq)$ constructed through disjoint unions, Cartesian products,
  and Kleene star operations, where $o(A)$ is the maximal order type
  of $(A,\leq)$, i.e.\ the order type of its maximal linearization,
  and $h$ is a low-degree polynomial in $g$.  This matches
  \autoref{thm-lft} because $o(\+N^Q)=\omega^{\#_{\!Q}}$.}%
\begin{theorem}[Parametric Bounds for Dickson's Lemma]\label{thm-lft}
  If $x_0,\ldots,x_L$ is a $(g,n)$-controlled bad sequence over
  $(\+N^Q,\subseteq)$ for some finite set $Q$, then $L\leq L_{g,\+N^Q}(n)\leq
  h_{\omega^{\#_{\!Q}}}(n)$ for the function $h(x)\eqdef\#_{\!Q}\cdot g(x)$.
\end{theorem}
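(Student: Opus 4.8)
The plan is to bound the length $L$ of a $(g,n)$-controlled bad sequence over $(\+N^Q,\subseteq)$ by induction on $\#_{\!Q}$, peeling off one coordinate at a time, exactly in the spirit of the extraction argument used to prove Dickson's Lemma in Example~\ref{lem-dickson}. Write $k=\#_{\!Q}$, fix a coordinate (say the last one), and split the bad sequence $x_0,\ldots,x_L$ according to the value of that coordinate. The key observation is that whenever we have a maximal stretch of the sequence on which the last coordinate is \emph{weakly increasing}, the projections of these elements onto the remaining $k-1$ coordinates must themselves form a bad sequence over $(\+N^{Q\setminus\{q\}},\subseteq)$: if two of them were $\leq_\times$-comparable, then together with the weakly increasing last coordinate they would give a $\subseteq$-comparable pair in the original sequence, contradicting badness. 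Since the last coordinate is bounded by $g^i(n)$ at position $i$ and takes values in $\+N$, the number of times it can \emph{strictly decrease} along the whole sequence is controlled, and between such decreases it is weakly increasing, so the whole sequence decomposes into boundedly many $(k-1)$-dimensional bad blocks, each again $(g,\cdot)$-controlled with an appropriately increased initial norm.

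The second ingredient is to track how the control evolves under this decomposition and to match it against the Cicho\'n recursion \eqref{eq-cichon-rel}. Setting $h(x)\eqdef k\cdot g(x)$, the target bound $h_{\omega^{k}}(n)$ unfolds via $h_{\lambda}(x)=h_{\lambda(x)}(x)$ with $\omega^{k}=\omega^{k-1}\cdot\omega$ and fundamental sequence $(\omega^{k})(n)=\omega^{k-1}\cdot(n+1)$, so that $h_{\omega^{k}}(n)=h_{\omega^{k-1}\cdot(n+1)}(n)$, and then $h_{\omega^{k-1}\cdot(m+1)}(n)=1+h_{\omega^{k-1}\cdot m}(h(n))+\cdots$ peels off one factor of $\omega^{k-1}$ at a time. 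Each such factor should correspond to one $(k-1)$-dimensional block in the decomposition above; the arithmetic is arranged precisely so that the factor $k$ in $h(x)=k\cdot g(x)$ absorbs the bookkeeping overhead (the $+1$'s and the bound on how far the norm can have grown when a block starts). The base case $k=0$ is $\+N^{\emptyset}$, a one-point wqo, with $L_{g,\+N^{\emptyset}}(n)=1=h_{\omega^{0}}(n)=h_1(n)$, matching; alternatively one can take $k=1$ and use \eqref{eq-badd-nat} with $h_\omega(n)=n+1$ as the base.

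The main obstacle, and the place where the choice $h(x)=k\cdot g(x)$ rather than just $g$ matters, is getting the inductive bookkeeping on the initial norms exactly right: when we pass to the $i$th $(k-1)$-dimensional block, its elements already sit at depth roughly $\ell_0+\cdots+\ell_{i-1}$ in the original sequence (where $\ell_j$ is the length of the $j$th block), so their norms are bounded not by $g^{(\text{small})}(n)$ but by $g$ iterated that many times; one must check that feeding these inflated norms into the $(k-1)$-dimensional length function and summing still telescopes into a single application of $h^{(\cdot)}$ to $n$. This is exactly the content of the Hardy/Cicho\'n identity \eqref{eq-hardy-cichon}, $h^{\alpha}(x)=h^{h_\alpha(x)}(x)$: the Hardy function $h^{\omega^{k-1}}$ tracks ``the norm after exhausting one $(k-1)$-dimensional block'', while $h_{\omega^{k-1}}$ tracks ``how long that block was'', and the recursion for $h_{\omega^{k}}$ composes these correctly. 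I would therefore prove simultaneously, by induction on $k$, the two-part statement that (a) the norm has grown by at most a factor governed by $h^{\omega^{k}}$ and (b) the length is at most $h_{\omega^{k}}$, so that the induction hypothesis is strong enough to feed the block decomposition; the routine inequalities comparing $g^{\ell}$ with $h^{\,\cdot}$ and collapsing nested iterations are then mechanical.
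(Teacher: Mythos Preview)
First, note that the paper does not actually prove this theorem; it is stated with a pointer to \cite[Chap.~2]{SS-esslli2012}. So there is no in-paper proof to compare against, but your outline can still be assessed on its own, and it has a genuine gap.

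The problematic step is the sentence ``Since the last coordinate is bounded by $g^i(n)$ at position~$i$ and takes values in~$\+N$, the number of times it can strictly decrease along the whole sequence is controlled.'' A bound on the \emph{size} of the entries does not bound the number of local decreases of one coordinate: that coordinate may oscillate freely. Concretely, take any $(g,n)$-controlled bad sequence $y_0,\dots,y_M$ over $\+N^{k-1}$ (with $n\geq 1$) and set $x_i\eqdef(y_i,\,i\bmod 2)\in\+N^{k}$. One checks at once that $x_0,\dots,x_M$ is again $(g,n)$-controlled and bad: when $i$ is odd and $j>i$ is even the last coordinate already blocks $x_i\leq x_j$, and in all other cases $y_i\not\leq y_j$ does. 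Its last coordinate is $0,1,0,1,\dots$, so the number of maximal weakly-increasing stretches is about $M/2$---i.e.\ as large as the $(k{-}1)$-dimensional length bound itself, not anything expressible from $n$ alone. Your induction therefore cannot get started: the number of blocks in your decomposition already depends on the very quantity you are trying to bound. (The obvious repair---taking ``record minima'' of the last coordinate to get at most $n+1$ blocks---fails for the dual reason: within such a block the last coordinate need not be weakly increasing, so the projected block need not be bad in $\+N^{k-1}$.)

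The argument in the cited reference proceeds along a different decomposition. Once $x_0$ is fixed (all entries $\leq n$), every later $x_j$ lies in the \emph{residual} $\+N^Q\setminus{\uparrow}x_0=\bigcup_{q\in Q}\{y:y(q)<x_0(q)\}$, and this residual, viewed as a normed wqo, reflects into a structure of strictly smaller ordinal invariant---morally a sum of at most $\#_{\!Q}\cdot n$ copies of $\+N^{k-1}$, which is where the factor $\#_{\!Q}$ in $h(x)=\#_{\!Q}\cdot g(x)$ enters. One then recurses on the residual wqo, not on one coordinate. Your identification of the fundamental-sequence arithmetic $(\omega^{k})(x)=\omega^{k-1}\cdot(x+1)$ and of the role of~\eqref{eq-hardy-cichon} is correct and is exactly what drives that recursion; what is missing is the right decomposition to feed into it.
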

\noindent
By Equation \eqref{eq-hardy-cichon}, we also deduce that the norm of
the elements $x_i$ in this bad sequence cannot be larger than
$h^{\omega^{\#_{\!Q}}}(n)$.

A key property of such bounds expressed with Cicho\'n and Hardy
functions is that they are \emph{constructible} with negligible
computational overhead (just apply their definition on a suitable
encoding of the ordinals), which means that we can employ them in
algorithms (see \autoref{sub-comb-algo} for applications).

Given how enormous the Cicho\'n and Hardy functions can grow, it is
reasonable at this point to ask how tight the bounds provided by
\autoref{thm-lft} really are.  At least in the case $\#_{\!Q}=1$, we see
these bounds match \eqref{eq-badd-nat} since $h_\omega(n)=n+1$.  We will
show in \autoref{sec-low} that similarly enormous complexity lower
bounds can be proven for Termination or Coverability problems on
WSTS, leaving only inessential gaps with the upper bounds like
\autoref{thm-lft}.  In fact, we find such parametric bounds to be overly
precise, because we would like to express simple \emph{complexity}
statements about decision problems.

\subsection{Fast Growing Complexity Classes}\label{sub-fgcc}

As witnessed in \autoref{ssec-ex-broadcast} and the enormous upper
bounds provided by \autoref{thm-lft}, we need to deal with
non-elementary complexities.  The corresponding non-elementary
complexity classes %
are arguably missing from most textbooks and references on complexity.
For instance, the  \emph{Complexity
Zoo}\footnote{\url{https://complexityzoo.uwaterloo.ca}}, an otherwise very richly
populated place, features no
intermediate steps between \textsc{Elementary} and the next class,
namely \textsc{Primitive-Recursive} (aka \pr), and a similar gap
occurs between \pr\ and \textsc{Recursive} (aka R).  If we are to
investigate the complexity of decision problems on WSTSs, much more
fine-grained hierarchies are required.

Here, we employ a hierarchy of ordinal-indexed \emph{fast growing}
complexity classes $(\FC\alpha)_\alpha$ tailored
to \emph{completeness} proofs for non-elementary
problems \citep[see][]{Schmitz13}.  When exploring larger
complexities, the hierarchy includes non primitive-recursive classes,
for which quite a few complete problems have arisen in the recent
years, e.g.\ $\FC\omega$
in~\citep{fct,jancar,urquhart99,phs-mfcs2010,figueira12,bresolin2012,lazic13},
$\FC{\omega^\omega}$ in~\citep{lcs,mtl,ata,tsoreach,pepreg,barcelo13},
$\FC{\omega^{\omega^\omega}}$ in~\citep{HSS-lics2012}, and
$\FC{\ezero}$ in~\citep{HaaseSS13}.

Let us define the classes of reductions and of problems we will consider:
\begin{align}\label{eq-fast-class}
 \FGH{\alpha}&\eqdef\bigcup_{c<\omega}\cc{FDTime}\left(H^{\omega^\alpha\cdot
 c}(n)\right)\,,& \F\alpha&\eqdef\!\!\!\!\!\bigcup_{p\in\bigcup_{\beta<\alpha}\FGH{\beta}}\!\!\!\!\!\cc{DTime}\left(H^{\omega^\alpha}(p(n))\right)\,.
\end{align}
The hierarchy of function classes
$(\FGH{\alpha})_{\alpha\geq 2}$ is the \emph{extended Grzegorczyk
hierarchy}~\citep{lob70}, and provides us with classes of
non-elementary reductions: for instance $\FGH{2}$ is the set of
elementary functions, $\bigcup_{\beta<\omega}\FGH{\beta}$ that of
primitive-recursive functions, and $\bigcup_{\beta<\omega^\omega}\FGH{\beta}$
that of multiply-recursive functions.  The
hierarchy of complexity classes $(\FC\alpha)_{\alpha\geq 3}$ features for
instance %
a class %
$\F\omega$
of Ackermannian problems closed under
    primitive-recursive reductions, and a class %
$\F{\omega^\omega}$
of hyper-Ackermannian problems closed under multiply-recursive
reductions.  Intuitively, $\F\omega$-complete problems are not
primitive-recursive, but only \emph{barely} so, and similarly for the
other levels.

\subsection{Combinatorial Algorithms}\label{sub-comb-algo}

\autoref{thm-lft} together with \autoref{ex-control-suc}
(resp.~\ref{ex-control-pred}) provides complexity upper bounds on the
Termination (resp.\ Coverability) algorithm when applied to broadcast
protocols.  Indeed, a nondeterministic program can guess a witness of
(non-) Termination (resp.\ Coverability), which is of length bounded by
$L_{g,\+N^{Q}}(n)+1$, where $g$ was computed in
\autoref{ex-control-suc} (resp.~\ref{ex-control-pred}) and $n$ is the
size of the initial configuration $s_\init$ (resp.\ target
configuration $t$).  By \autoref{thm-lft} such a witness has length
bounded by $h_{\omega^{\#_{\!Q}}}(n)$ for $h(n)=\#_{\!Q}\cdot g(n)$, and
by \eqref{eq-hardy-cichon}, the norm of the elements along this
sequence is bounded by $h^{\omega^{\#_{\!Q}}}(n)$.  Thus this
non-deterministic program only needs space bounded by
$\#_{\!Q}\cdot \log(h^{\omega^{\#_{\!Q}}}(n))\leq H^{\omega^\omega}(p(n+\#_{\!Q}))$ for some
primitive-recursive function $p$.  Hence:
\begin{fact}
  Termination and Coverability of broadcast protocols are in
  $\FC\omega$.
\end{fact}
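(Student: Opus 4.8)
The plan is to reformulate the two finite-witness lemmas of \autoref{sec-verif} as \emph{bounded} searches, to control the sizes of the objects these searches manipulate by feeding the control functions of \autoref{ex-control-suc} and \autoref{ex-control-pred} into \autoref{thm-lft}, and then to absorb the resulting fast-growing bounds into the class $\FC\omega$. For \emph{Termination}: given a broadcast protocol $\Btt$ with $\#_{\!Q}$ locations and a start configuration $s_\init$ of norm $n$ (raised, if need be, above the protocol-dependent threshold of \autoref{ex-control-suc}), \autoref{lem-fin-wit-termin} says that $\Scal_\Btt$ has an infinite run from $s_\init$ iff it has a good finite run from $s_\init$, and by \autoref{ex-control-suc} such a run is $(g,n)$-controlled for $g(x)\eqdef\#_{\!Q}\cdot x$. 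By \autoref{thm-lft} applied with $h(x)\eqdef\#_{\!Q}\cdot g(x)$ we may take this good run of length at most $h_{\omega^{\#_{\!Q}}}(n)$, and by \eqref{eq-hardy-cichon} every configuration along it has norm at most $h^{\omega^{\#_{\!Q}}}(n)$. A nondeterministic machine can therefore guess the positions $i<j$ of the increasing pair together with the run $s_0,\dots,s_j$ one step at a time, keeping at each moment only $s_i$, the current state, and a step counter (so $O(\#_{\!Q}\cdot\log h^{\omega^{\#_{\!Q}}}(n))$ bits in all), plus polynomially more workspace to evaluate $\Post$ and to test $s_i\subseteq s_j$; it accepts iff $s_i\subseteq s_j$. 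Equivalently, one runs the clocked bounded-search variant of \autoref{sub-comb-algo}, first computing the bound $h_{\omega^{\#_{\!Q}}}(n)$ with negligible overhead. This decides non-Termination in space polynomial in $\#_{\!Q}\cdot\log h^{\omega^{\#_{\!Q}}}(n)$ and $|\Btt|$.

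For \emph{Coverability} the same scheme runs backwards: by \autoref{lem-cover}, if the target $t$ (of norm $n$) is coverable from $s_\init$ then there is a minimal revbad pseudorun $s_0,\dots,s_\ell$ with $s_0\subseteq s_\init$ and $s_\ell=t$ whose reverse is bad over $(\Nat^Q,\subseteq)$; by \autoref{ex-control-pred} that reverse sequence is $(g,n)$-controlled for $g(x)\eqdef x+2$, so \autoref{thm-lft} with $h(x)\eqdef\#_{\!Q}\cdot g(x)$ again bounds its length by $h_{\omega^{\#_{\!Q}}}(n)$ and, by \eqref{eq-hardy-cichon}, its norms by $h^{\omega^{\#_{\!Q}}}(n)$. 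A nondeterministic machine guesses this pseudorun from $t$, at each step moving to one of the finitely many, polynomial-time computable minimal pseudopredecessors supplied by $\MinPPre$, remembering only the current state and a counter, and finally checks $s_0\subseteq s_\init$; the space accounting is exactly as above.

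It remains to translate these resource bounds into membership in $\FC\omega$, and this is the step I expect to be the real work. Writing $s\eqdef\#_{\!Q}\cdot\log h^{\omega^{\#_{\!Q}}}(n)$ for the space bound, a textbook simulation puts non-Termination and Coverability in deterministic time $2^{O(s)}$ (and $\#_{\!Q}$, being at most the input size, may be folded into $n$), so it suffices to show $2^{O(s)}\leq H^{\omega^\omega}(p(n))$ for some primitive-recursive $p$. This rests on the standard monotonicity and compression properties of the Hardy hierarchy: since $h$ is linear in its argument with a coefficient polynomial in $\#_{\!Q}$, its transfinite iterate $h^{\omega^{\#_{\!Q}}}$ is majorised by $H^{\omega^{\#_{\!Q}+c}}$ after a polynomial reparametrisation of the argument; because $\#_{\!Q}$ is a parameter and $\omega^{\#_{\!Q}+c}<\omega^\omega$, one has $H^{\omega^{\#_{\!Q}+c}}(m)\leq H^{\omega^\omega}(m')$ with $m'$ primitive-recursive in $m$ and $\#_{\!Q}$; and $H^{\omega^\omega}$ absorbs the outer logarithm, the factor $\#_{\!Q}$, and the final exponential at the cost of composing $p$ with one more elementary function. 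Since $\bigcup_{\beta<\omega}\FGH{\beta}$ is exactly the class of primitive-recursive functions, this places both problems in $\bigcup_{p}\cc{DTime}\big(H^{\omega^\omega}(p(n))\big)=\FC\omega$; and as $\FC\omega$ is a deterministic-time class it is closed under complement, so Termination itself lands in $\FC\omega$ as well. The harder-to-find ingredients are precisely these hierarchy manipulation lemmas — the comparison of Hardy functions across ordinal indices, the $h^\alpha$-versus-$H^\alpha$ estimates, and the absorption of elementary functions — for which one invokes \citep{Schmitz13,SS-esslli2012}.
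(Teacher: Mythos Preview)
Your proposal is correct and follows essentially the same approach as the paper: bound the witness length via \autoref{thm-lft} applied to the control functions of \autoref{ex-control-suc} and \autoref{ex-control-pred}, bound the norms via \eqref{eq-hardy-cichon}, and absorb the resulting $h^{\omega^{\#_{\!Q}}}$ space bound into $H^{\omega^\omega}(p(n+\#_{\!Q}))$ for a primitive-recursive $p$. You spell out more explicitly than the paper does the hierarchy-manipulation step and the closure under complement needed for Termination, but these are exactly the details the paper leaves implicit when it writes the single inequality $\#_{\!Q}\cdot\log(h^{\omega^{\#_{\!Q}}}(n))\leq H^{\omega^\omega}(p(n+\#_{\!Q}))$.
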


Thanks to the upper bounds on the length of bad sequences, the
algorithms sketched above are really \emph{combinatorial} algorithms:
they compute a maximal length for a witness and then
nondeterministically check for its existence.  In the case of
Termination, we can even further simplify the algorithm: if a run
starting from $s_\init$ has length $>L_{g,A}(n)$, this run is
necessarily a good sequence and the WSTS does not terminate.

\section{Lower Bounds on Complexity}
\label{sec-low}
When considering the mind-numbing complexity upper bounds that come
with applications of the Length Function Theorems from
\autoref{sec-upb} to the algorithms of \autoref{sec-verif}, a natural
question that arises is whether this is the complexity one gets when
using the rather simplistic Coverability algorithm from
\autoref{ssec-covering}, or whether it is the intrinsic complexity of
the Coverability \emph{problem} for a given WSTS model.

There is no single answer here: for instance, on Petri nets, a
breadth-first backward search for a coverability witness actually
works in \cc{2ExpTime}~\citep{bozzelli11} thanks to bounds on the size
of minimal witnesses due to \citet{rackoff78}; on the other hand, a
depth-first search for a termination witness can require Ackermannian
time~\citep{cardoza76}, this although both problems are
\cc{ExpSpace}-complete.

Nevertheless, in many cases, the enormous complexity upper bounds
provided by the Length Function Theorems are matched by similar lower
bounds on the complexity of the Coverability and Termination
\emph{problems}, for instance for reset/transfer Petri
nets~\citep[$\FC\omega$-complete, see][]{phs-mfcs2010}, lossy channel
systems~\citep[$\FC{\omega^\omega}$-complete, see][]{lcs}, timed-arc
Petri nets~\citep[$\FC{\omega^{\omega^\omega}}$-complete,
  see][]{HSS-lics2012}, or priority channel systems
\citep[$\FC\ezero$-complete, see][]{HaaseSS13}.

Our goal in this section is to present the common principles
behind these $\FC\alpha$-hardness proofs. We will avoid most of the
technical details, relying rather on simple examples to convey the
main points: the interested readers will find all details in the
references.
\\

Let us consider a WSTS model like broadcast protocols or lossy channel
systems.  Without a rich repertoire of $\FC\alpha$-complete problems,
one proves $\FC\alpha$-hardness by reducing, e.g., from the acceptance
problem for a $H^{\omega^\alpha}$-space bounded Minsky machine $M$. In
order to simulate $M$ in the WSTS model at hand, the essential part is
to design a way to compute $H^{\omega^\alpha}(n_0)$ reliably and store
this number as a WSTS state, where it can be used as a working space
for the simulation of $M$.  In all the cases we know, these
computations cannot be performed directly (indeed, our WSTS are not
Turing-powerful), but $\?S_M$, the constructed WSTS, is able to
\emph{weakly} compute such values. This means that $\?S_M$ may produce
the correct value for $H^{\omega^\alpha}(n_0)$ but also
(nondeterministically) some smaller values.  However, the reduction is
able to include a check that the computation was actually correct,
either at the end of the
simulation~\citep{phs-mfcs2010,lcs,HSS-lics2012,HaaseSS13}---by weakly
computing the inverse of $H^{\omega^\alpha}$ and testing through the
coverability condition whether the final configuration is the one we
started with---, or continuously at every step of the
simulation~\citep{lazic13}.

\subsection{Hardy Computations}

As an example, when $\alpha<\omega^k$, one may weakly compute
$H^{\alpha}$ and its inverse using a broadcast protocol with $k+O(1)$
locations.  In order to represent an ordinal $\alpha=\omega^{k-1}\cdot
c_{k-1}+\cdots+\omega^0\cdot c_0$ in {CNF}, one can employ a
configuration $s_\alpha=\{p_0^{c_0},\ldots,p_{k-1}^{c_{k-1}}\}$ in a
broadcast protocol having locations $p_0,\ldots,p_{k-1}$ among others.

There remains other issues with ordinal representations in a WSTS
state (see \autoref{sub-robust}), but let us first turn to the
question of computing some $H^\alpha(n)$. The definition of the Hardy
functions is based on very fine-grained steps, and this usually
simplifies their implementations.  We can
reformulate \eqref{eq-hardy-rel} as a rewrite system over pairs
$(\alpha,x)$ of an ordinal and an argument:
\begin{xalignat}{2}
\label{eq-hardy-rewr}
\tag{\ref{eq-hardy-rel}'}
(\alpha+1,x)&\to (\alpha,x+1)\:,
&
(\lambda,x)&\to (\lambda(x),x)\:.
\end{xalignat}
A sequence $(\alpha_0,x_0)\to (\alpha_1,x_1)\to \cdots\to
(\alpha_\ell,x_\ell)$ of such ``Hardy steps'' implements
\eqref{eq-hardy-rel} and maintains $H^{\alpha_i}(x_i)$ invariant.
It must terminates since $\alpha_0>\alpha_1>\cdots$ is decreasing.
When eventually $\alpha_\ell=0$, the computation is over and the
result is $x_\ell=H^{\alpha_0}(x_0)$.

\begin{example}[Hardy Computations in Broadcast Protocols]
  Implementing \eqref{eq-hardy-rewr} in a broadcast protocol requires
  us to consider two cases.  Recognising whether $\alpha_i$ is a successor
  boils down to checking that $c_0>0$ in the CNF. In that case, and
  assuming the  above representation,
  \eqref{eq-hardy-rewr} is implemented by moving one process from
  location $p_0$ to a location $x$ where the current value of
  $x_i$ is stored. The broadcast protocol will need a rule like
  $p_0\xrightarrow{\text{sp}(x)}\bot$.

  Alternatively, $\alpha_i$ is a limit $\gamma+\omega^b$ when
  $c_0=c_1=\ldots=c_{b-1}=0<c_b$. In that case, \eqref{eq-hardy-rewr}
  is implemented by moving one process out of the $p_b$ location, and
  adding to $p_{b-1}$ as many processes as there are currently in
  $x$. This can be implemented by moving temporarily all processes
  in $x$ to some auxiliary $x_\tmp$ location, then putting them back in $x$ one by one, each
  time spawning a new process in $p_{b-1}$.

The difficulty with these steps (and with recognising that $\alpha_i$
is a limit) is that one needs to test that some locations are empty,
an operation not provided in broadcast protocols (adding emptiness
tests would make broadcast protocols Turing-powerful, and would break
the monotonicity of behaviour).  Instead of being tested, these locations can be
forcefully emptied through broadcast steps. This is where the
computation of $H^{\alpha_i}(x_i)$ may err and end up with a smaller
value, if the locations were not empty.\hfill\qed
\end{example}
We refer to~\cite{phs-mfcs2010} or \citep[Chapter~3]{SS-esslli2012}
for a detailed implementation of this scheme using lossy counters
machines: the encoding therein can easily be reformulated as a
broadcast protocol:

\begin{fact}\label{fc-lowb}
Termination and Coverability of broadcast protocols are
$\FC\omega$-hard.
\end{fact}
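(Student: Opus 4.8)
The plan is to reduce both problems from a canonical $\FC\omega$-complete problem, for instance the acceptance problem for a Minsky machine $M$ whose counters, on input $x$, stay bounded by $H^{\omega^\omega}(|x|)$; this problem is $\FC\omega$-complete \citep[see][]{Schmitz13,phs-mfcs2010}. Given such an instance $(M,x)$, set $n\eqdef|x|$ and recall that $H^{\omega^\omega}(n)=H^{\omega^{n+1}}(n)$ for the fundamental sequences of \eqref{eq-fund-def}. I would build, in logarithmic space, a broadcast protocol $\?S_{M,x}$ with locations including ``digit'' locations $p_0,\dots,p_{n+1}$ (used to store the {CNF} of the ordinals below $\omega^{n+2}$ that will occur), an initial configuration $s_\init$ encoding the Hardy pair $(\omega^{n+1},n)$, and a target configuration $t$, arranged so that $t$ is coverable from $s_\init$ if and only if $M$ accepts $x$.

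The protocol $\?S_{M,x}$ runs in three phases. In the \emph{forward phase} it weakly evaluates $H^{\omega^\omega}(n)$: from the encoding of $(\omega^{n+1},n)$ it repeatedly applies the Hardy rewrite rules \eqref{eq-hardy-rewr}, implemented exactly as in the Hardy computation gadget sketched just above --- a successor step moves one process out of $p_0$ into an ``argument'' location, a limit step copies the argument into $p_{b-1}$ while forcibly emptying $p_0,\dots,p_{b-1}$ through broadcasts. Since the ordinal component strictly decreases, this phase always terminates, leaving in the argument location a value $B\le H^{\omega^\omega}(n)$, with equality precisely when no forcible emptying ever discarded a process. In the \emph{simulation phase}, $\?S_{M,x}$ simulates $M$ on $x$ using $B$ as a supply of working processes: each counter of $M$ becomes a location, increments and decrements are spawn and rendez-vous steps exchanging processes with the supply, and each zero-test is realised \emph{weakly}, as a broadcast that resets the tested location --- faithful when that location is genuinely empty, lossy otherwise ---; a step counter computed alongside $B$ caps the number of simulated steps, so this phase terminates too. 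In the \emph{backward phase}, starting from the halting configuration of $M$, $\?S_{M,x}$ runs the Hardy rules in reverse to weakly evaluate the inverse of $H^{\omega^\omega}$, and finally attempts to cover the configuration $t$ that encodes $(\omega^{n+1},n)$ together with a flag recording that $M$ halted in an accepting state.

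Correctness is then a consequence of monotonicity: every step of every phase can only make the multisets at hand smaller, so any run that does cover $t$ cannot have discarded anything during a forcible emptying, cannot have taken a spurious zero-test, and cannot have under-evaluated either Hardy computation; hence $B=H^{\omega^\omega}(n)$ was computed exactly, $M$ was faithfully simulated within that much space and time, and $M$ accepted $x$. Conversely, if $M$ accepts $x$, the obvious loss-free run of $\?S_{M,x}$ covers $t$. For Termination one replaces $\?S_{M,x}$ by $\?S'_{M,x}$, obtained by adding a rule that loops (say, a spawn rule) on the configuration witnessing that $t$ has been reached; as each of the three phases terminates regardless of any losses, and the loop becomes enabled only once the coverability check has succeeded, $\?S'_{M,x}$ has an infinite run from $s_\init$ if and only if $M$ accepts $x$. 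The part that requires genuine work, and where I expect the main obstacle to lie, is exactly this bookkeeping: laying out the forward and backward Hardy computations so that recovering the initial configuration forces the \emph{whole} run to have been loss-free, and --- for Termination --- checking that no phase can diverge when $M$ fails to accept. This is precisely the content of the detailed constructions for lossy counter machines and reset Petri nets in \cite{phs-mfcs2010} and \citep[Chapter~3]{SS-esslli2012}, which carry over to broadcast protocols essentially verbatim once a counter reset is read as a broadcast received only by the matching location.
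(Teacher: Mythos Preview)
Your plan is correct and follows essentially the same approach as the paper: the paper does not give a self-contained proof of this fact but rather sketches, in \autoref{sec-low} and the preceding Example on Hardy computations in broadcast protocols, precisely the three-phase forward/simulate/backward scheme you describe, and then defers the details to \cite{phs-mfcs2010} and \citep[Chapter~3]{SS-esslli2012} with the remark that the lossy-counter-machine encoding ``can easily be reformulated as a broadcast protocol.'' Your proposal spells out this reformulation (resets as broadcasts, the loop gadget for Termination) and the unfolding $H^{\omega^\omega}(n)=H^{\omega^{n+1}}(n)$ that keeps the number of digit locations polynomial in the input, which is exactly what is needed.
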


\begin{remark}[Parametric Broadcast Protocols]
  The lower bound for Coverability in \autoref{fc-lowb} also holds in
  a variant of broadcast protocols with no spawn operations, which
  considers instead \emph{parametric} initial configurations.  Such
  parametric configurations allow to specify that some locations
  might be initially visited by any finite number of processes.  By
  adding a parametric ``pool'' location, a spawn operation can then
  be seen as a rendez-vous with a process from the pool.

  Termination in parametric broadcast protocols is undecidable,
  since we can reduce from the structural termination problem by
  making every location
  parametric~\citep{esparza99,mayr03,phs-rp10}.\hfill\qed
\end{remark}

\subsection{Robust Encodings}\label{sub-robust}

The above scheme for transforming pairs $(\alpha,x)$ according to
Eq.~\eqref{eq-hardy-rewr} can be used with ordinals higher than
$\omega^k$. Ordinals up to $\omega^{\omega^\omega}$ have been encoded
as configurations of lossy channel systems~\cite{lcs}, of timed-arc
nets (up to $\omega^{\omega^{\omega^\omega}}$,
see~\cite{HSS-lics2012}), and of priority channel systems (up to
$\ezero$, see~\cite{HaaseSS13}). The operations one performs on these
encodings are recognising whether an ordinal is a successor or a
limit, transforming an $\alpha+1$ in $\alpha$, and a $\lambda$ in
$\lambda(x)$. Such operations can be involved, depending on the
encoding and the facilities offered by the WSTS:
see \cite{HSS-lics2012} for an especially involved example.

It is usually not possible to perform Hardy steps exactly in the WSTS
under consideration. Hence one is content with weak implementations
that may err when realising a step $(\alpha_i,x_i)\to
(\alpha_{i+1},x_{i+1})$. One important difficulty arises here: it is
not enough to guarantee that any weak step $(\alpha_i,x_i)\to
(\alpha',x')$ has $\alpha'\leq \alpha_{i+1}$ and $x\leq x_{i+1}$. One
further needs $H^{\alpha'}(x')\leq H^{\alpha_{i+1}}(x_{i+1})$, a
property called ``robustness''. Since Hardy functions are in general
not monotone in the $\alpha$ exponent (see~\cite{SS-esslli2012}),
extra care is needed in order to control what kinds of errors are
acceptable when ending up with $(\alpha',x')$ instead of
$(\alpha_{i+1},x_{i+1})$.  We invite the reader to have a look at the
three above-mentioned papers for examples of how these issues can be
solved in each specific case.

\section{Concluding Remarks}
As the claim \emph{Well-Structured Transition Systems Everywhere!} made
in the title of~\citep{finkel98b} has been further justified by twelve
years of applications of WSTS in various fields, the need to better
understand the computational power of these systems has also risen.
This research program is still very new, but it has already
contributed mathematical tools and methodological guidelines for
\begin{itemize}
\item proving upper bounds, based on \emph{length functions theorems}
  that provide bounds on the length of controlled bad sequences.  We
  illustrated this on two algorithms for Coverability and
  Termination in \autoref{sec-upb}, but the same ideas are readily
  applicable to many algorithms that rely on a wqo for their
  termination---and thus not only in a WSTS context---: one merely has
  to find out how the bad sequences constructed by the algorithm are
  \emph{controlled}.
\item establishing matching lower bounds: here our hope is for the
  problems we have proven hard for some complexity class $\FC\alpha$
  to be reused as convenient ``master'' problems in reductions.  Failing
  that, such lower bound proofs can also rely on a reusable framework
  developed in \autoref{sec-low}: our reductions from Turing or Minsky
  machines with bounded resources construct the machine workspace as
  the result of a Hardy computation, thanks to a suitable {robust}
  encoding of ordinals.
\end{itemize}
There are still many open issues that need to be addressed to advance
this program: to develop length function theorems for more wqos, to
investigate different wqo algorithms (like the computation of
upward-closed sets from oracles for membership and vacuity by
\citet{vjgl}), and to populate the catalogue of master $\FC\alpha$-hard
problems, so that hardness proofs do not have to proceed from first
principles and can instead rely on simpler reductions.

We hope that this paper can be used as an  enticing primer for researchers
who have been using WSTS as a decidability tool only, and are now ready to
use them for more precise complexity analyses.

\subsubsection*{Acknowledgements.}
We thank Christoph Haase and Prateek Karandikar for their helpful comments on an
earlier version of this paper.  The remaining errors are of course
entirely ours.

\bibliographystyle{abbrvnat}
\bibliography{conferences,journalsabbr,power}

\begin{thebibliography}{0}
\providecommand{\natexlab}[1]{#1}
\providecommand{\url}[1]{\texttt{#1}}
\expandafter\ifx\csname urlstyle\endcsname\relax
  \providecommand{\doi}[1]{doi: #1}\else
  \providecommand{\doi}{doi: \begingroup \urlstyle{rm}\Url}\fi

\end{thebibliography}


\begin{thebibliography}{48}
\providecommand{\natexlab}[1]{#1}
\providecommand{\url}[1]{\texttt{#1}}
\expandafter\ifx\csname urlstyle\endcsname\relax
  \providecommand{\doi}[1]{doi: #1}\else
  \providecommand{\doi}{doi: \begingroup \urlstyle{rm}\Url}\fi

\bibitem[Abdulla(2010)]{abdulla2010}
P.~A. Abdulla.
\newblock Well (and better) quasi-ordered transition systems.
\newblock \emph{Bull.\ Symbolic Logic}, 16\penalty0 (4):\penalty0 457--515,
  2010.

\bibitem[Abdulla and Jonsson(1996)]{abdulla96b}
P.~A. Abdulla and B.~Jonsson.
\newblock Verifying programs with unreliable channels.
\newblock \emph{Inform. and Comput.}, 127\penalty0 (2):\penalty0 91--101, 1996.
\newblock \doi{10.1006/inco.1996.0053}.

\bibitem[Abdulla et~al.(2000)Abdulla, {\v{C}}er{\=a}ns, Jonsson, and
  Tsay]{abdulla2000c}
P.~A. Abdulla, K.~{\v{C}}er{\=a}ns, B.~Jonsson, and Y.-K. Tsay.
\newblock Algorithmic analysis of programs with well quasi-ordered domains.
\newblock \emph{Inform. and Comput.}, 160\penalty0 (1/2):\penalty0 109--127,
  2000.
\newblock \doi{10.1006/inco.1999.2843}.

\bibitem[Abdulla et~al.(2011)Abdulla, Delzanno, and Van{ }Begin]{abdulla2011}
P.~A. Abdulla, G.~Delzanno, and L.~Van{ }Begin.
\newblock A classification of the expressive power of well-structured
  transition systems.
\newblock \emph{Inform. and Comput.}, 209\penalty0 (3):\penalty0 248--279,
  2011.
\newblock \doi{10.1016/j.ic.2010.11.003}.

\bibitem[Abriola et~al.(2012)Abriola, Figueira, and Senno]{abriola}
S.~Abriola, S.~Figueira, and G.~Senno.
\newblock Linearizing bad sequences\string: upper bounds for the product and
  majoring well quasi-orders.
\newblock In L.~Ong and R.~de~Queiroz, editors, \emph{WoLLIC~2012}, volume 7456
  of \emph{LNCS}, pages 110--126. Springer, 2012.
\newblock \doi{10.1007/978-3-642-32621-9_9}.

\bibitem[Atig et~al.(2010)Atig, Bouajjani, Burckhardt, and Musuvathi]{tsoreach}
M.~F. Atig, A.~Bouajjani, S.~Burckhardt, and M.~Musuvathi.
\newblock On the verification problem for weak memory models.
\newblock In \emph{POPL 2010}, pages 7--18. ACM, 2010.
\newblock \doi{10.1145/1706299.1706303}.

\bibitem[Barcel{\'o} et~al.(2013)Barcel{\'o}, Figueira, and Libkin]{barcelo13}
P.~Barcel{\'o}, D.~Figueira, and L.~Libkin.
\newblock Graph logics with rational relations.
\newblock \emph{Logic. Meth. in Comput. Sci.}, 9\penalty0 (3), 2013.
\newblock \doi{10.2168/LMCS-9(3:1)2013}.

\bibitem[Bertrand and Schnoebelen(2013)]{BS-fmsd2012}
N.~Bertrand and {\relax Ph}.~Schnoebelen.
\newblock Computable fixpoints in well-structured symbolic model checking.
\newblock \emph{Form. Methods in Syst. Des.}, 43\penalty0 (2):\penalty0
  233--267, 2013.
\newblock \doi{10.1007/s10703-012-0168-y}.

\bibitem[Bozzelli and Ganty(2011)]{bozzelli11}
L.~Bozzelli and P.~Ganty.
\newblock Complexity analysis of the backward coverability algorithm for
  {VASS}.
\newblock In G.~Delzanno and I.~Potapov, editors, \emph{RP~2011}, volume 6945
  of \emph{LNCS}, pages 96--109. Springer, 2011.
\newblock \doi{10.1007/978-3-642-24288-5_10}.

\bibitem[Bresolin et~al.(2012)Bresolin, Della~Monica, Montanari, Sala, and
  Sciavicco]{bresolin2012}
D.~Bresolin, D.~Della~Monica, A.~Montanari, P.~Sala, and G.~Sciavicco.
\newblock Interval temporal logics over finite linear orders: The complete
  picture.
\newblock In \emph{ECAI 2012}, volume 242 of \emph{Frontiers in Artificial
  Intelligence and Applications}, pages 199--204. IOS, 2012.
\newblock \doi{10.3233/978-1-61499-098-7-199}.

\bibitem[Cardoza et~al.(1976)Cardoza, Lipton, and Meyer]{cardoza76}
E.~Cardoza, R.~Lipton, and A.~R. Meyer.
\newblock Exponential space complete problems for {Petri} nets and commutative
  subgroups.
\newblock In \emph{STOC'76}, pages 50--54. ACM, 1976.
\newblock \doi{10.1145/800113.803630}.

\bibitem[Chambart and Schnoebelen(2007)]{pepreg}
P.~Chambart and {\relax Ph}.~Schnoebelen.
\newblock Post embedding problem is not primitive recursive, with applications
  to channel systems.
\newblock In V.~Arvind and S.~Prasad, editors, \emph{FSTTCS 2007}, volume 4855
  of \emph{LNCS}, pages 265--276. Springer, 2007.
\newblock \doi{10.1007/978-3-540-77050-3_22}.

\bibitem[Chambart and Schnoebelen(2008)]{lcs}
P.~Chambart and {\relax Ph}.~Schnoebelen.
\newblock The ordinal recursive complexity of lossy channel systems.
\newblock In \emph{LICS 2008}, pages 205--216. IEEE Press, 2008.
\newblock \doi{10.1109/LICS.2008.47}.

\bibitem[Cicho\'n and {Tahhan Bittar}(1998)]{cichon98}
E.~A. Cicho\'n and E.~{Tahhan Bittar}.
\newblock Ordinal recursive bounds for {Higman}'s {T}heorem.
\newblock \emph{Theor. Comput. Sci.}, 201\penalty0 (1--2):\penalty0 63--84,
  1998.
\newblock \doi{10.1016/S0304-3975(97)00009-1}.

\bibitem[Clote(1986)]{clote}
P.~Clote.
\newblock On the finite containment problem for {P}etri nets.
\newblock \emph{Theor. Comput. Sci.}, 43:\penalty0 99--105, 1986.
\newblock \doi{10.1016/0304-3975(86)90169-6}.

\bibitem[Emerson and Namjoshi(1998)]{emerson98}
E.~A. Emerson and K.~S. Namjoshi.
\newblock On model checking for non-deterministic infinite-state systems.
\newblock In \emph{LICS~'98}, pages 70--80. IEEE Press, 1998.
\newblock \doi{10.1109/LICS.1998.705644}.

\bibitem[Esparza et~al.(1999)Esparza, Finkel, and Mayr]{esparza99}
J.~Esparza, A.~Finkel, and R.~Mayr.
\newblock On the verification of broadcast protocols.
\newblock In \emph{LICS~'99}, pages 352--359. IEEE Press, 1999.
\newblock \doi{10.1109/LICS.1999.782630}.

\bibitem[Esparza et~al.(2013)Esparza, Ganty, and Majumdar]{esparza13}
J.~Esparza, P.~Ganty, and R.~Majumdar.
\newblock Parameterized verification of asynchronous shared-memory systems.
\newblock In N.~Sharygina and H.~Veith, editors, \emph{CAV~2013}, volume 8044
  of \emph{LNCS}, pages 124--140. Springer, 2013.
\newblock \doi{10.1007/978-3-642-39799-8_8}.

\bibitem[Figueira(2012)]{figueira12}
D.~Figueira.
\newblock Alternating register automata on finite words and trees.
\newblock \emph{Logic. Meth. in Comput. Sci.}, 8\penalty0 (1):\penalty0 22,
  2012.
\newblock \doi{10.2168/LMCS-8(1:22)2012}.

\bibitem[Figueira et~al.(2011)Figueira, Figueira, Schmitz, and
  Schnoebelen]{FFSS-lics2011}
D.~Figueira, S.~Figueira, S.~Schmitz, and {\relax Ph}.~Schnoebelen.
\newblock {A}ckermannian and primitive-recursive bounds with {D}ickson's
  {L}emma.
\newblock In \emph{LICS~2011}, pages 269--278. IEEE Press, 2011.
\newblock \doi{10.1109/LICS.2011.39}.

\bibitem[Finkel(1987)]{finkel87c}
A.~Finkel.
\newblock A generalization of the procedure of {Karp} and {Miller} to well
  structured transition systems.
\newblock In \emph{ICALP '87}, volume 267 of \emph{LNCS}, pages 499--508.
  Springer, 1987.
\newblock \doi{10.1007/3-540-18088-5_43}.

\bibitem[Finkel(1994)]{finkel94}
A.~Finkel.
\newblock Decidability of the termination problem for completely specificied
  protocols.
\newblock \emph{Distributed Computing}, 7\penalty0 (3):\penalty0 129--135,
  1994.
\newblock \doi{10.1007/BF02277857}.

\bibitem[Finkel and Goubault{-}Larrecq(2012)]{FWD-WSTS-J2}
A.~Finkel and J.~Goubault{-}Larrecq.
\newblock Forward analysis for~{WSTS}, part~{II}: Complete {WSTS}.
\newblock \emph{Logic. Meth. in Comput. Sci.}, 8\penalty0 (4:28), 2012.
\newblock \doi{10.2168/LMCS-8(3:28)2012}.

\bibitem[Finkel and Schnoebelen(2001)]{finkel98b}
A.~Finkel and {\relax Ph}.~Schnoebelen.
\newblock Well-structured transition systems everywhere!
\newblock \emph{Theor. Comput. Sci.}, 256\penalty0 (1--2):\penalty0 63--92,
  2001.
\newblock \doi{10.1016/S0304-3975(00)00102-X}.

\bibitem[Goubault{-}Larrecq(2009)]{vjgl}
J.~Goubault{-}Larrecq.
\newblock On a generalization of a result by {V}alk and {J}antzen.
\newblock Research Report LSV-09-09, Laboratoire Sp{\'e}cification et
  V{\'e}rification, ENS Cachan, France, 2009.

\bibitem[Haase et~al.(2013)Haase, Schmitz, and Schnoebelen]{HaaseSS13}
C.~Haase, S.~Schmitz, and {\relax Ph}.~Schnoebelen.
\newblock The power of priority channel systems.
\newblock In P.~R. D'Argenio and H.~Melgratti, editors, \emph{Concur~2013},
  volume 8052 of \emph{LNCS}, pages 319--333. Springer, 2013.
\newblock \doi{10.1007/978-3-642-40184-8_23}.

\bibitem[Haddad et~al.(2012)Haddad, Schmitz, and Schnoebelen]{HSS-lics2012}
S.~Haddad, S.~Schmitz, and {\relax Ph}.~Schnoebelen.
\newblock The ordinal-recursive complexity of timed-arc {P}etri nets, data
  nets, and other enriched nets.
\newblock In \emph{LICS 2012}, pages 355--364. IEEE Press, 2012.
\newblock \doi{10.1109/LICS.2012.46}.

\bibitem[Jan\v{c}ar(2001)]{jancar}
P.~Jan\v{c}ar.
\newblock Nonprimitive recursive complexity and undecidability for {P}etri net
  equivalences.
\newblock \emph{Theor. Comput. Sci.}, 256\penalty0 (1--2):\penalty0 23--30,
  2001.
\newblock \doi{10.1016/S0304-3975(00)00100-6}.

\bibitem[Jurdzi{\'n}ski and Lazi{\'c}(2011)]{jurdzinski2011}
M.~Jurdzi{\'n}ski and R.~Lazi{\'c}.
\newblock Alternating automata on data trees and {XPath} satisfiability.
\newblock \emph{ACM Trans. Comput. Logic}, 12\penalty0 (3), 2011.
\newblock \doi{10.1145/1929954.1929956}.

\bibitem[Kruskal(1972)]{kruskal72}
J.~B. Kruskal.
\newblock The theory of well-quasi-ordering: A frequently discovered concept.
\newblock \emph{J.~Comb. Theory A}, 13\penalty0 (3):\penalty0 297--305, 1972.
\newblock \doi{10.1016/0097-3165(72)90063-5}.

\bibitem[Kurucz(2006)]{kurucz06}
A.~Kurucz.
\newblock Combining modal logics.
\newblock In P.~Blackburn, J.~v. Benthem, and F.~Wolter, editors,
  \emph{Handbook of Modal Logics}, volume~3, chapter~15, pages 869--926.
  Elsevier Science, 2006.
\newblock \doi{10.1016/S1570-2464(07)80018-8}.

\bibitem[Lasota and Walukiewicz(2008)]{ata}
S.~Lasota and I.~Walukiewicz.
\newblock Alternating timed automata.
\newblock \emph{ACM Trans. Comput. Logic}, 9\penalty0 (2):\penalty0 10, 2008.
\newblock \doi{10.1145/1342991.1342994}.

\bibitem[Lazi\'c et~al.(2013)Lazi\'c, Ouaknine, and Worrell]{lazic13}
R.~Lazi\'c, J.~Ouaknine, and J.~Worrell.
\newblock Zeno, {H}ercules and the {H}ydra: {D}ownward rational termination is
  {A}ckermannian.
\newblock In K.~Chatterjee and J.~Sgall, editors, \emph{MFCS~2013}, volume 8087
  of \emph{LNCS}, pages 643--654. Springer, 2013.
\newblock \doi{10.1007/978-3-642-40313-2_57}.

\bibitem[L\"ob and Wainer(1970)]{lob70}
M.~L\"ob and S.~Wainer.
\newblock Hierarchies of number theoretic functions, {I}.
\newblock \emph{Arch. Math. Logic}, 13:\penalty0 39--51, 1970.
\newblock \doi{10.1007/BF01967649}.

\bibitem[Mayr and Meyer(1981)]{fct}
E.~W. Mayr and A.~R. Meyer.
\newblock The complexity of the finite containment problem for {P}etri nets.
\newblock \emph{J.~ACM}, 28\penalty0 (3):\penalty0 561--576, 1981.
\newblock \doi{10.1145/322261.322271}.

\bibitem[Mayr(2003)]{mayr03}
R.~Mayr.
\newblock Undecidable problems in unreliable computations.
\newblock \emph{Theor. Comput. Sci.}, 297\penalty0 (1--3):\penalty0 337--354,
  2003.
\newblock \doi{10.1016/S0304-3975(02)00646-1}.

\bibitem[McAloon(1984)]{mcaloon}
K.~McAloon.
\newblock Petri nets and large finite sets.
\newblock \emph{Theor. Comput. Sci.}, 32\penalty0 (1--2):\penalty0 173--183,
  1984.
\newblock \doi{10.1016/0304-3975(84)90029-X}.

\bibitem[Milner(1990)]{milner90b}
R.~Milner.
\newblock Operational and algebraic semantics of concurrent processes.
\newblock In J.~v. Leeuwen, editor, \emph{Handbook of Theoretical Computer
  Science}, volume~B, chapter~19, pages 1201--1242. Elsevier Science, 1990.

\bibitem[Ouaknine and Worrell(2007)]{mtl}
J.~Ouaknine and J.~Worrell.
\newblock On the decidability and complexity of {M}etric {T}emporal {L}ogic
  over finite words.
\newblock \emph{Logic. Meth. in Comput. Sci.}, 3\penalty0 (1):\penalty0 8,
  2007.
\newblock \doi{10.2168/LMCS-3(1:8)2007}.

\bibitem[Rackoff(1978)]{rackoff78}
C.~Rackoff.
\newblock The covering and boundedness problems for vector addition systems.
\newblock \emph{Theor. Comput. Sci.}, 6\penalty0 (2):\penalty0 223--231, 1978.
\newblock \doi{10.1016/0304-3975(78)90036-1}.

\bibitem[Schmitz(2013)]{Schmitz13}
S.~Schmitz.
\newblock Complexity hierarchies beyond elementary.
\newblock Manuscript, \href{http://arxiv.org/abs/1312.5686}{arXiv:1312.5686
  [cs.CC]}, 2013.

\bibitem[Schmitz and Schnoebelen(2011)]{SS-icalp2011}
S.~Schmitz and {\relax Ph}.~Schnoebelen.
\newblock Multiply-recursive upper bounds with {Higman}'s {L}emma.
\newblock In \emph{ICALP~2011}, volume 6756 of \emph{LNCS}, pages 441--452.
  Springer, 2011.
\newblock \doi{10.1007/978-3-642-22012-8_35}.

\bibitem[Schmitz and Schnoebelen(2012)]{SS-esslli2012}
S.~Schmitz and {\relax Ph}.~Schnoebelen.
\newblock Algorithmic aspects of {WQO} theory.
\newblock Lecture notes, 2012.
\newblock URL \url{http://cel.archives-ouvertes.fr/cel-00727025}.

\bibitem[Schnoebelen(2010{\natexlab{a}})]{phs-mfcs2010}
{\relax Ph}.~Schnoebelen.
\newblock Revisiting {A}ckermann-hardness for lossy counter machines and reset
  {P}etri nets.
\newblock In P.~Hlin\v{e}n\'y and A.~Ku\v{c}era, editors, \emph{MFCS 2010},
  volume 6281 of \emph{LNCS}, pages 616--628. Springer, 2010{\natexlab{a}}.
\newblock \doi{10.1007/978-3-642-15155-2_54}.

\bibitem[Schnoebelen(2010{\natexlab{b}})]{phs-rp10}
{\relax Ph}.~Schnoebelen.
\newblock Lossy counter machines decidability cheat sheet.
\newblock In A.~Ku{\v c}era and I.~Potapov, editors, \emph{{RP~2010}}, volume
  6227 of \emph{LNCS}, pages 51--75. Springer, 2010{\natexlab{b}}.
\newblock \doi{10.1007/978-3-642-15349-5_4}.

\bibitem[Schwichtenberg and Wainer(2012)]{sw12}
H.~Schwichtenberg and S.~S. Wainer.
\newblock \emph{Proofs and Computation}.
\newblock Perspectives in Logic. Cambridge University Press, 2012.

\bibitem[Urquhart(1999)]{urquhart99}
A.~Urquhart.
\newblock The complexity of decision procedures in relevance logic {II}.
\newblock \emph{J.~Symb. Log.}, 64\penalty0 (4):\penalty0 1774--1802, 1999.
\newblock \doi{10.2307/2586811}.

\bibitem[Weiermann(1994)]{weiermann94}
A.~Weiermann.
\newblock Complexity bounds for some finite forms of {K}ruskal's {T}heorem.
\newblock \emph{J.~Symb. Comput.}, 18\penalty0 (5):\penalty0 463--488, 1994.
\newblock \doi{10.1006/jsco.1994.1059}.

\end{thebibliography}

\end{document}